\documentclass[
  aps,
  prx,
  reprint,
  superscriptaddress,
  nofootinbib,
  longbibliography,
  floatfix
]{revtex4-2}
\usepackage{braket}
\usepackage[T1]{fontenc}
\usepackage[utf8]{inputenc}
\usepackage{lmodern}
\usepackage{amsmath,amssymb,mathtools}
\usepackage{bm}
\usepackage{graphicx}
\usepackage{microtype}
\usepackage{amsthm}
\usepackage[colorlinks=true,allcolors=blue]{hyperref}
\usepackage{dsfont}
\usepackage{xcolor}

\newcommand{\trace}{\operatorname{Tr}}
\newcommand{\ketbra}[2]{\ket{#1}\!\bra{#2}}
\newcommand{\Ic}{I_{\mathrm c}}

\DeclareRobustCommand{\rchi}{{\mathpalette\irchi\relax}}
\newcommand{\irchi}[2]{\raisebox{\depth}{$#1\chi$}}

\theoremstyle{plain}
\newtheorem{theorem}{Theorem}
\newtheorem{proposition}[theorem]{Proposition}
\newtheorem{lemma}[theorem]{Lemma}
\newtheorem{corollary}[theorem]{Corollary}
\theoremstyle{definition}
\newtheorem{remark}[theorem]{Remark}

\begin{document}

\title{Sharp Quantum Capacity Thresholds: Exponential Strong Converses for Degradable and Antidegradable Channels}

\author{Tulja Varun Kondra}
\email{tuljavarun@gmail.com}
\author{Raphael Brinster}
\author{Hermann Kampermann}
\author{Dagmar Bruß}
\author{Nikolai Wyderka}
\affiliation{Institut für Theoretische Physik III,
Heinrich-Heine-Universität Düsseldorf, Universitätsstraße 1,
40225 Düsseldorf, Germany}
\date{\today}

\begin{abstract}
The quantum capacity of a noisy channel quantifies the maximum rate at which quantum information can be transmitted reliably. For general channels, its evaluation requires an optimization over arbitrarily many channel uses. Degradable channels form a central exception: their capacity is given by the single-letter coherent information, while antidegradable channels have zero capacity. Nevertheless, even for these fundamental classes, it has remained open whether communication above capacity becomes possible when a fixed non-maximal error is tolerated. Here we resolve this problem by proving an exponential strong converse for every finite-dimensional degradable and antidegradable channel: at any rate above capacity, the fidelity of every coding scheme decays exponentially with the number of channel uses. As an immediate consequence, we establish the first all-code exponential strong converse for the quantum erasure channel throughout its full parameter range, strengthening previous results that applied only to almost all codes. We also show that exponential strong-converse bounds are preserved under receiver post-processing. This yields efficiently computable semidefinite-programming bounds for arbitrary finite-dimensional channels, improved bounds for Pauli channels, and an exact exponential strong converse for a nondegradable multilevel amplitude-damping family.

\end{abstract}

\maketitle

% =====================================================================
\section{Introduction}
\label{sec:introduction}
% =====================================================================

Reliable quantum communication is a basic resource for distributing
entanglement, implementing nonlocal gates, and building quantum
networks~\cite{Bennett1993Teleportation,Cirac1999Distributed,Wehner2018QuantumInternet}.
In analogy with Shannon's
theory~\cite{Shannon1948}, the basic question is how much quantum
information survives per use of a noisy channel $\mathcal N$. An $(n,M)$
code transmits an $M$-dimensional quantum system through $n$ independent
uses of $\mathcal N$ and therefore operates at rate
\begin{equation}
  r=\frac{1}{n}\log M
  \label{eq:intro-rate}
\end{equation}
qubits per channel use, and the quantum capacity $Q(\mathcal N)$ is the
supremum of the rates at which the transmission error can be made to vanish
as $n\to\infty$. The Lloyd-Shor-Devetak theorem identifies
$Q(\mathcal N)$ with the regularized coherent
information~\cite{Lloyd1997,Devetak2005}. This characterization is
operationally complete but requires an optimization over unboundedly many
channel uses and has no known finite-letter characterization in general
\cite{KretschmannWerner2004}.

Two structural classes escape the regularization, and they do so for
complementary physical reasons. A channel is \emph{degradable} when the
receiver can reconstruct, from their own output alone, everything that
leaked into the environment. Nothing is available to the environment that
the receiver does not already hold, the coherent information becomes
additive, and the capacity collapses to a single-letter
formula~\cite{DevetakShor2005}. A channel is \emph{antidegradable} when the
opposite holds: the environment can reconstruct the receiver's output. The
channel can then realize two systems with identical receiver-channel marginals,
a configuration incompatible with reliable transmission of unknown quantum
information; consequently its capacity is zero~\cite{WolfPerezGarcia2007}. The quantum
erasure channel interpolates between the two,
being degradable below and antidegradable above erasure probability
$1/2$~\cite{BennettDiVincenzoSmolin1997}. Figure~\ref{fig:degradable-antidegradable-extension}(a),(b)
summarizes the two definitions.

Knowing $Q(\mathcal N)$ does not by itself settle the operational status of
the threshold, because the capacity is defined by requiring the error to
vanish. Nothing in that definition rules out a scheme that transmits at a
rate above $Q(\mathcal N)$ while incurring a fixed error of, say, $10\%$
forever. A \emph{strong converse} excludes this: it asserts that at every
asymptotic rate above the capacity the error tends to its maximal value, or
equivalently that the entanglement fidelity $F_n$ tends to zero. An
\emph{exponential} strong converse sharpens the conclusion to
$F_n\le 2^{-\gamma n}$ for some $\gamma>0$, making the capacity a sharp
threshold with a quantitative penalty for exceeding it.

For unassisted quantum communication, a full strong converse has remained elusive. Morgan and Winter proved a ``pretty strong'' converse for degradable channels: above capacity, the asymptotic error is not zero, but need not go to one~\cite{MorganWinter2014}. They further reduced the full strong-converse conjecture for degradable channels to establishing a strong converse for symmetric zero-capacity channels, which remained open. Tomamichel, Wilde, and Winter showed that the Rains information provides an exponential strong-converse upper bound on the achievable rate for every channel~\cite{TomamichelWildeWinter2017}. This yields a full strong converse whenever the Rains information coincides with the quantum capacity, as for generalized dephasing channels, but generally leaves a gap. For the erasure channel, strong converses were known only for maximally entangled inputs~\cite{SharmaWarsi2013} or almost all codes~\cite{WildeWinter2014}, rather than arbitrary codes. Antidegradable channels were known to satisfy a strong converse for private communication, but only a pretty strong converse for quantum communication~\cite{KaurEtAl2021,KhanianHirche2025}. Most recently, an exponential strong converse for Pauli channels was obtained within the restricted class of stabilizer codes~\cite{Tomamichel2026}.

\emph{Results.}---We prove an all-code exponential strong converse for every
finite-dimensional antidegradable channel (Theorem~\ref{thm:antidegradable}),
in the generalized form required by the Morgan-Winter reduction. Feeding
this into that reduction yields an exponential strong converse for every
finite-dimensional degradable channel at its single-letter capacity
(Theorem~\ref{thm:degradable}), resolving the conjecture of
Ref.~\cite{MorganWinter2014}. Because the erasure channel is degradable or
antidegradable at every erasure probability, the two results cover it
completely and upgrade the almost-all-codes statement of
Ref.~\cite{WildeWinter2014} to an all-code one
(Corollary~\ref{cor:erasure}). We then show that exponential
strong-converse rates are inherited under receiver post-processing
(Proposition~\ref{prop:postprocessing}), a closure property that converts
any degradable extension of a channel into an all-code exponential bound.
Applying this construction to the largest antidegradable component of an
arbitrary channel gives an efficiently computable semidefinite-programming
bound (Theorem~\ref{thm:universal}). Furthermore, we derive an explicit Pauli-channel specialization. Receiver
post-processing also establishes an all-code exponential converse at the
exact capacity of a nondegradable multilevel amplitude-damping family.

\emph{Method.}---The technical core is a quantitative implementation of
no-cloning. An antidegradable channel admits a symmetric two-output
extension, so each of the $n$ channel uses can be made to produce two
outputs with the same marginal statistics as the receiver's. Every binary
string $x\in\{0,1\}^n$ then selects one output per use and defines a
placement of the \emph{same} decoder on a different set of systems; see
Fig.~\ref{fig:degradable-antidegradable-extension}(c). All $2^n$ placements
succeed with exactly the same fidelity, so the fidelity is bounded by the
operator norm of their uniform average. On the pairs selected by our signed averaging kernel, two placements share
few outputs and therefore have small overlap, controlled by the dimension of
the shared systems relative to the code dimension. What remains is a
Boolean-analysis problem: to certify that a family of projectors with
controlled \emph{pairwise} overlaps has a small average, we build a weight
matrix supported only on pairs that share at most $t$
coordinates, from a low-degree polynomial approximation to the binary $\mathrm{NOR}_n$ function.
The approximate-degree bound then converts local incompatibility into a
global, exponentially small bound. The argument places no restriction on
the encoder or on the decoder, which may be arbitrary and collective.

% =====================================================================
\section{Setup}
\label{sec:setup}
% =====================================================================

We work with finite-dimensional systems throughout. For a system $A$ we
write $\mathcal L(A)$ for the linear operators on $A$ and $|A|:=\dim A$,
and all logarithms are base two unless written as $\ln$. Following the
standard convention, $A'$ denotes the input system of a channel and a channel $\mathcal N:A'\to B$ is a
completely positive trace-preserving map from $\mathcal L(A')$ to
$\mathcal L(B)$.

\subsection{Complementary channels, degradability, symmetric extensions}

Every channel has a Stinespring representation~\cite{Stinespring1955}: there
is an environment $E$ and an isometry $V:A'\to BE$ with
\begin{equation}
  \mathcal N(\rho)=\trace_E\!\left[V\rho V^\dagger\right].
  \label{eq:stinespring}
\end{equation}
The complementary channel collects what the environment learns,
\begin{equation}
  \mathcal N^{c}(\rho)=\trace_B\!\left[V\rho V^\dagger\right].
  \label{eq:complementary}
\end{equation}
Different Stinespring representations change $\mathcal N^c$ only by an
isometry on $E$, which affects none of the definitions below.

$\mathcal N$ is \emph{degradable} if there is a channel
$\mathcal D:B\to E$ with
\begin{equation}
  \mathcal N^{c}=\mathcal D\circ\mathcal N ,
  \label{eq:degradable}
\end{equation}
and \emph{antidegradable} if there is a channel $\mathcal A:E\to B$ with
\begin{equation}
  \mathcal N=\mathcal A\circ\mathcal N^{c}.
  \label{eq:antidegradable}
\end{equation}
We call a channel \emph{self-complementary} if it admits a Stinespring
isometry $W:A'\to D_0D_1$, with $D_0\simeq D_1$, whose marginals coincide
under a fixed identification. Such a channel is both degradable and
antidegradable. This is the stronger notion called ``symmetric'' in the
Morgan-Winter construction used below.

Antidegradability has an equivalent form that we use throughout. A channel
$\mathcal N:A'\to B$ is antidegradable if and only if it admits a two-output
extension $\widetilde{\mathcal N}:A'\to B_0B_1$, with $B_0\simeq B_1\simeq B$,
whose two marginals both reproduce it,
\begin{equation}
  \trace_{B_1}\circ\,\widetilde{\mathcal N}
  =\trace_{B_0}\circ\,\widetilde{\mathcal N}
  =\mathcal N ,
  \label{eq:two-output-extension}
\end{equation}
and the extension may always be taken invariant under exchange of $B_0$ and
$B_1$~\cite{MyhrLutkenhaus2009,WolfPerezGarcia2007}. One way to see the
channel statement is to apply the symmetric-extension criterion to the
normalized Choi state and use the Choi-Jamio\l kowski correspondence.
Equation~\eqref{eq:two-output-extension} is an equality of marginal channels,
including on inputs entangled with a reference; it does not assert that the
two outputs are independent clones. 

\begin{figure*}[t]
    \centering
    \includegraphics[width=\textwidth]{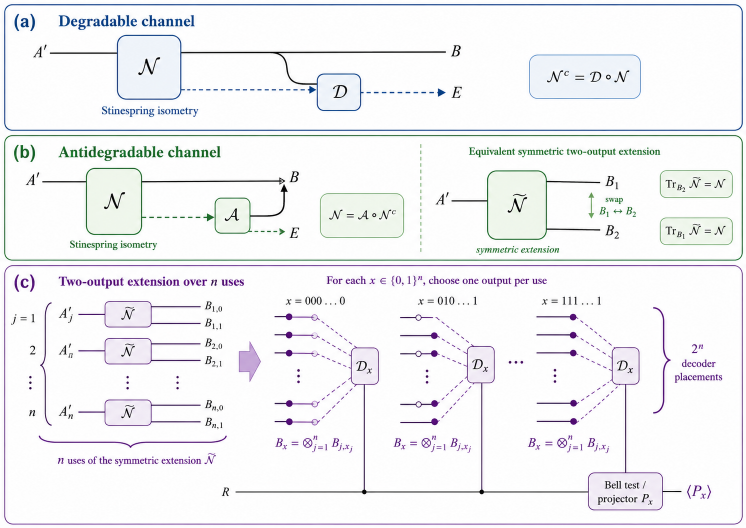}
    \caption{\textbf{Degradable and antidegradable channels, and the
    decoder-placement mechanism.}
    \textbf{(a)} $\mathcal N$ is degradable when the environment output can
    be simulated from the receiver output by a degrading channel
    $\mathcal D$, so that $\mathcal N^{c}=\mathcal D\circ\mathcal N$.
    \textbf{(b)} $\mathcal N$ is antidegradable when the receiver output can
    be simulated from the environment, $\mathcal N=\mathcal A\circ\mathcal N^{c}$;
    equivalently, it admits a symmetric two-output extension
    $\widetilde{\mathcal N}:A'\to B_0B_1$ whose two marginal channels both
    equal $\mathcal N$.
    \textbf{(c)} Applying the symmetric extension independently to $n$
    channel uses produces two receiver-like outputs at each position. Every
    string $x\in\{0,1\}^{n}$ selects one output from each pair and hence
    defines a subsystem $B_{x}=\bigotimes_{j=1}^{n}B_{j,x_{j}}$ on which a
    copy $\mathcal D_{x}$ of the original decoder may be placed. This yields
    $2^{n}$ overlapping placements and corresponding fidelity projectors
    $P_{x}$, all with the same success probability, which form the basis of
    the argument.}
    \label{fig:degradable-antidegradable-extension}
\end{figure*}

\subsection{Codes and figures of merit}

We phrase the converse in terms of entanglement generation, the task in
which the sender establishes a maximally entangled state with the receiver.
An entanglement-transmission code gives such a scheme by choosing a maximally
entangled input, while the standard expurgation and fidelity estimates relate
subspace transmission to entanglement transmission without changing the
asymptotic capacity~\cite{BarnumKnillNielsen2000,Devetak2005}. Entanglement
generation is also the formulation used by the Morgan-Winter reduction.

Let $R$, $S$, $\widehat S$ be $M$-dimensional systems and write
\begin{equation}
  \Phi_M:=\ketbra{\Phi_M}{\Phi_M},
  \qquad
  \ket{\Phi_M}:=\frac{1}{\sqrt M}\sum_{i=1}^{M}\ket{i}_R\ket{i}_S .
  \label{eq:mes}
\end{equation}
An $(n,M)$ entanglement-generation code for $\mathcal N:A'\to B$ consists of
an encoder $\mathcal E_n:S\to A'^{\,n}$ and a decoder
$\mathcal D_n:B^n\to\widehat S$. The $n$ channel uses act in parallel on the
encoded state, and the final state is
\begin{equation}
  \omega^{(n)}_{R\widehat S}
  :=\left(\operatorname{id}_R\otimes\,
  \mathcal D_n\circ\mathcal N^{\otimes n}\circ\mathcal E_n\right)(\Phi_M).
  \label{eq:decoded-state}
\end{equation}
The target is a maximally entangled state of the same dimension, which we
write $\Phi_{R\widehat S}$ to record that it now lives on $R\widehat S$
rather than on $RS$. It is often convenient to absorb the encoder into the
input state,
\begin{equation}
  \rho_{RA'^{\,n}}:=\left(\operatorname{id}_R\otimes\mathcal E_n\right)(\Phi_M),
  \qquad \rho_R=\frac{I_R}{M}.
  \label{eq:code-state}
\end{equation}

We also need the slightly broader task introduced by Morgan and
Winter~\cite{MorganWinter2014}. A \emph{generalized $(n,M)$ Bell-target
scheme} is a pair $(\rho_{RA'^{\,n}},\mathcal D_n)$ consisting of an
arbitrary state on $RA'^{\,n}$ with $|R|=M$, together with a decoder
$\mathcal D_n:B^n\to\widehat S$; the figure of merit is still the overlap of
the output with $\Phi_{R\widehat S}$, but the marginal $\rho_R$ need no
longer be maximally mixed. Standard codes are the special case
\eqref{eq:code-state}. The generalization is not cosmetic: the reduction of
Ref.~\cite{MorganWinter2014} converts a code for a degradable channel into a
scheme of exactly this kind for an associated self-complementary channel, so a
converse for degradable channels can only be extracted from a converse that
holds in the generalized setting. Our argument costs nothing extra here,
since it never uses the form of $\rho_{RA'^{\,n}}$. Morgan and Winter state
the corresponding optimization for pure test states; the two versions agree
because the Bell overlap is linear in $\rho_{RA'^{\,n}}$, so at least one pure
state in any convex decomposition performs at least as well as the mixture.

With $F(\rho,\sigma):=\lVert\sqrt\rho\sqrt\sigma\rVert_1^2$ and purified
distance $P(\rho,\sigma):=\sqrt{1-F(\rho,\sigma)}$, and because the target
is pure, the entanglement fidelity of a scheme is
\begin{equation}
  F_n:=F\!\left(\omega^{(n)}_{R\widehat S},\Phi_{R\widehat S}\right)
  =\trace\!\left[\Phi_{R\widehat S}\,\omega^{(n)}_{R\widehat S}\right],
  \label{eq:fidelity}
\end{equation}
and its error is $\varepsilon_n:=\sqrt{1-F_n}$. Thus $\varepsilon_n\to0$
means $F_n\to1$ and $\varepsilon_n\to1$ means $F_n\to0$. Note that $F$ is
already the squared fidelity, so no separate symbol is needed. The rate of
an $(n,M_n)$ scheme is
\begin{equation}
  r_n:=\frac{1}{n}\log M_n .
  \label{eq:code-rate}
\end{equation}
Finally, $N_{\mathrm E}(n,\varepsilon\mid\mathcal N)$ denotes the largest
code dimension $M$ for which some generalized $(n,M)$ Bell-target scheme
achieves purified-distance error at most $\varepsilon$. This is the same
error convention used in the Morgan-Winter bound below.

\subsection{Fixed-error and strong-converse capacities}

For $0\le\varepsilon<1$ the fixed-error quantum capacity is \cite{MorganWinter2014}
\begin{equation}
  Q^\varepsilon(\mathcal N)
  :=\sup\left\{\liminf_{n\to\infty}r_n:\
  \limsup_{n\to\infty}\varepsilon_n\le\varepsilon\right\},
  \label{eq:fixed-error-capacity}
\end{equation}
the supremum running over sequences of standard entanglement-generation
codes; the usual capacity is $Q(\mathcal N)=Q^0(\mathcal N)$, and the
strong-converse capacity is
$Q^\dagger(\mathcal N):=\sup_{0\le\varepsilon<1}Q^\varepsilon(\mathcal N)$.
Above $Q^\dagger(\mathcal N)$ the error is driven to one no matter how large
a fixed non-maximal error is tolerated, and the channel has the
strong-converse property when $Q^\dagger(\mathcal N)=Q(\mathcal N)$.

We control the speed as well. A number $r$ is an \emph{exponential
strong-converse rate} for $\mathcal N$ if for every $r'>r$ there is a
$\gamma>0$ such that every sequence of codes with
$\liminf_n r_n\ge r'$ obeys
\begin{equation}
  F_n\le 2^{-\gamma n}
  \label{eq:exponential-sc}
\end{equation}
for all large enough $n$, and $Q^{\exp(\dagger)}(\mathcal N)$ is the
infimum of such $r$. By construction
\begin{equation}
  Q(\mathcal N)\le Q^\dagger(\mathcal N)\le Q^{\exp(\dagger)}(\mathcal N),
  \label{eq:capacity-hierarchy}
\end{equation}
and an exponential strong converse at the capacity forces equality
throughout.

\subsection{Coherent information}

Let $V:\mathcal H_{A'}\to\mathcal H_B\otimes\mathcal H_E$ be a Stinespring
isometry of $\mathcal N$, let $\psi_{RA'}$ purify an input state
$\rho_{A'}$, and let
$\omega_{RBE}:=(\operatorname{id}_R\otimes V)\psi_{RA'}
(\operatorname{id}_R\otimes V^\dagger)$. With $H(X)_\omega$ the von Neumann
entropy of the reduced state on $X$, the coherent information is
\begin{equation}
  \Ic(\rho_{A'},\mathcal N):=H(B)_\omega-H(E)_\omega=-H(R|B)_\omega ,
  \label{eq:coherent-information}
\end{equation}
the second equality holding because $\omega_{RBE}$ is pure, and
\begin{equation}
  Q^{(1)}(\mathcal N):=\max_{\rho_{A'}}\Ic(\rho_{A'},\mathcal N).
  \label{eq:one-letter}
\end{equation}
The Lloyd-Shor-Devetak theorem gives
$Q(\mathcal N)=\lim_{k}\frac1k Q^{(1)}(\mathcal N^{\otimes k})$
\cite{Lloyd1997,Devetak2005}, which collapses to
$Q(\mathcal N)=Q^{(1)}(\mathcal N)$ for degradable
channels~\cite{DevetakShor2005} and to $Q(\mathcal N)=0$ for antidegradable
ones. Neither formula says anything about $Q^\dagger$ or
$Q^{\exp(\dagger)}$, which is the gap this paper closes.

% =====================================================================
\section{Antidegradable channels}
\label{sec:antidegradable}
% =====================================================================

\begin{theorem}\label{thm:antidegradable}
Let $\mathcal N:A'\to B$ be antidegradable and finite-dimensional. For every
$r>0$ there are constants $\gamma_r>0$ and $n_r$ such that every
generalized $(n,M_n)$ Bell-target scheme with $M_n\ge 2^{nr}$ obeys
\begin{equation}
  F_n\le 2^{-\gamma_r n},\qquad n\ge n_r .
  \label{eq:antidegradable-bound}
\end{equation}
In particular $Q^{\exp(\dagger)}(\mathcal N)=0$, for arbitrary encoders and
arbitrary collective decoders.
\end{theorem}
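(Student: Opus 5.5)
Our plan follows the decoder-placement mechanism sketched in the Method paragraph.

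\emph{Setup.} Fix a generalized scheme $(\rho_{RA'^{\,n}},\mathcal D_n)$ with $M=M_n\ge 2^{nr}$. Let $\widetilde{\mathcal N}:A'\to B_0B_1$ be the symmetric two-output extension of \eqref{eq:two-output-extension}, and apply it to each channel use. This gives a state $\tau_{R B_{1,0}B_{1,1}\cdots B_{n,0}B_{n,1}}$. For each $x\in\{0,1\}^n$, the marginal on $RB_x$ equals $(\operatorname{id}_R\otimes\mathcal N^{\otimes n})(\rho)$, by the marginal equalities applied factorwise. Dilate $\mathcal D_n$ by a Stinespring isometry and purify $\tau$. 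Define $P_x$ as the projector obtained by pulling $\Phi_{R\widehat S}$ back through the isometric decoder placed on $B_x$. Since the decoders for different $x$ act on overlapping systems, we realize each placement on its own copy of the relevant Heisenberg-picture operator: $P_x:=(\mathcal D_n^{\dagger})_{B_x}(\Phi_{R\widehat S})$, an effect on $RB_x\subseteq R B^{2n}$ with $0\le P_x\le I$. We then have $\trace[\tau P_x]=F_n$ for every $x$. Hence, for any weights $w_x\ge0$ with $\sum_x w_x=1$,
\begin{equation}
F_n=\trace\Big[\tau\sum_x w_xP_x\Big]\le\Big\lVert\sum_x w_xP_x\Big\rVert_\infty .
\end{equation}
Since the $P_x$ are not projectors, we replace them by Naimark dilations if needed, so that the Gram-type estimates below apply.

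\emph{Overlap step.} For $x,y$ agreeing on a set $T$ with $|T|=t$, we will show $\lVert P_xP_y\rVert_\infty\lesssim |B|^{2t}/M^{2}$, or more usefully $\lVert P_xP_yP_x\rVert\le |B|^{2t}/M^2$. The reason is that $P_x$ and $P_y$ both certify maximal entanglement of the same $R$ with decodings of $B_x$ and $B_y$. Monogamy forces any shared state to pass the entanglement through $B_T$ alone, and $\Phi_M$ restricted to a $|B|^t$-dimensional channel has overlap at most $|B|^{t}/M$ (the standard bound $\trace[\Phi_M\sigma]\le d/M$ for $\sigma$ whose reference-side correlations factor through dimension $d$). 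Concretely, we write $P_x=\Phi_R$-pullback, trace out the complementary systems, and use $\trace_{B_{x\setminus T}}$ bounds together with the operator inequality $\Phi_{RS}\le \frac{1}{M}(\ldots)$. The rate assumption then gives overlaps $2^{-(2r-2\log|B|\,t/n)n}$, which are exponentially small when $t\le \delta n$ with $\delta<r/\log|B|$.

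\emph{Averaging step (main obstacle).} Pairwise overlaps alone do not bound $\lVert\sum_x w_xP_x\rVert$, because close pairs (small Hamming distance) have large overlap and are numerous. Following the Method paragraph, we bound $\lVert\sum_x P_x\rVert$ via a signed kernel. For any matrix $K$ with $K_{xx}$ controlled, we have $\lVert\sum_x P_x\rVert^2=\lVert\sum_{x,y}P_xP_y\rVert$. Instead, we test with $\Pi=\sum_x P_x$ and a polynomial: pick a real polynomial $p$ of degree $d\approx\sqrt{n\,\delta'}\,$ approximating $\mathrm{NOR}_n$ (equivalently, a symmetric function of Hamming weight that is $1$ at $0$ and small elsewhere). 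Its Fourier expansion gives a kernel $K(x,y)=\hat g(x\oplus y)$ supported on pairs with $|x\oplus y|\ge n-d$, i.e.\ sharing at most $t=d$ coordinates. We also require $K\succeq c\,I$ with $c$ of order $1$ after normalization. Then
\begin{equation}
c\,F_n\,2^n\le \sum_{x,y}K(x,y)\trace[\tau P_x^{1/2}P_yP_x^{1/2}]\le \sum_{x,y}|K(x,y)|\,\lVert P_xP_y\rVert .
\end{equation}
We would make this rigorous by working on vectors $P_x\ket{\tau}$ and the Gram matrix $G_{xy}=\langle\tau|P_xP_y|\tau\rangle$, using $\sum K_{xy}G_{xy}\ge \lambda_{\min}(K)\sum_x G_{xx}$ after swapping roles appropriately. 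The approximate-degree bound $\widetilde{\deg}_\epsilon(\mathrm{NOR}_n)=\Theta(\sqrt{n\log(1/\epsilon)})$ keeps $\ell_1$ of $\hat g$ at $2^{O(d\log n)}$. Combining this with overlaps at most $|B|^{2d}2^{-2nr}$, and taking $d=\kappa n$ with small $\kappa$ (a constant-fraction degree suffices to make the positivity margin $c$ constant), we get $F_n\le 2^{-\gamma_r n}$ for large $n$. The delicate points are constructing $K$ that is simultaneously PSD-dominant on the diagonal and supported only on far pairs, and making the Gram argument valid for noncommuting effects. We expect this to be the crux.
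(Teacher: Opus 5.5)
\textbf{Assessment.} The overall architecture matches the paper's: a symmetric extension, $2^n$ decoder placements, overlaps controlled by the shared dimension, and a kernel on far pairs built from a low-degree $\mathrm{NOR}_n$ approximant. However, the averaging step as you set it up cannot work, and this is a genuine gap.

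\textbf{The kernel cannot be positive definite.} A kernel supported only on pairs agreeing in at most $t<n$ coordinates has $K_{xx}=0$ for every $x$, so $\trace K=0$. Hence $K\succeq cI$ with $c>0$ is impossible, $\lambda_{\min}(K)<0$, and the inequality $\sum_{x,y}K_{xy}G_{xy}\ge\lambda_{\min}(K)\sum_xG_{xx}$ yields nothing. Putting weight back on the diagonal does not help, since $\lVert P_xP_x\rVert_\infty=1$. So this is not a delicate construction problem; it is the wrong target.

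\textbf{The missing idea.} The kernel should approximate the rank-one projector $U=2^{-n}\mathbf 1\mathbf 1^{\mathsf T}$, not a multiple of the identity. Require $W\mathbf 1=\mathbf 1$ and $\lVert W-U\rVert_\infty\le\zeta$. The choice $W_{xy}=\widehat q(\mathds 1\oplus x\oplus y)$ satisfies this, because its eigenvalues are $(-1)^{|z|}q(z)$: equal to $1$ at $z=0$ and at most $2\varepsilon$ in modulus elsewhere. You then exploit the particular Gram matrix rather than positivity of $W$. Let $\tau$ be the extended state including the ancillas and $G_{xy}=\trace[\tau P_xP_y]$. Then $\trace G=2^nF_n$, but $\trace[UG]=2^n\trace[\tau A^2]\ge 2^nF_n^2$. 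Since $(I-U)(W-U)(I-U)=W-U$,
\begin{equation*}
2^nL\eta\ \ge\ \trace[WG]\ \ge\ \trace[UG]-\zeta\,\trace[(I-U)G]\ \ge\ 2^n\bigl(F_n^2-\zeta F_n\bigr),
\end{equation*}
so $F_n\le\zeta+\sqrt{L\eta}$. The paper runs the same computation with the top eigenvector of $A$ in place of $\tau$. The comparison is quadratic versus linear in $F_n$, and that is what lets a zero-diagonal, indefinite kernel certify a small average.

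\textbf{The $\ell_1$ estimate is too weak.} With $d=\kappa n$, your bound $2^{O(d\log n)}=n^{O(\kappa n)}$ is superexponential. It swamps $|B|^{O(d)}2^{-\Omega(nr)}$ for every fixed $\kappa>0$. You need the entropy-scale bound instead. Since $\widehat q$ is supported on $|e|\le t$, Cauchy--Schwarz and Parseval give
\begin{equation*}
\sum_e|\widehat q(e)|\le\Bigl(\sum_{j\le t}\binom nj\Bigr)^{1/2}\Bigl(2^{-n}\sum_zq(z)^2\Bigr)^{1/2}\le 2^{nh_2(t/n)/2}.
\end{equation*}
Its exponent vanishes as $t/n\to0$, which is what makes $\delta\log d+\tfrac12h_2(\delta)<r$ satisfiable for every $r>0$.

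\textbf{The overlap claim.} Your first form, $\lVert P_xP_y\rVert_\infty\lesssim|B|^{2t}/M^2$, is false. If $P_x=\Phi_{R\widehat S_x}\otimes I$ and $P_y=\Phi_{R\widehat S_y}\otimes I$ act on disjoint output systems, then $P_xP_yP_x=P_x/M^2$, so $\lVert P_xP_y\rVert_\infty=1/M$. The correct bound is $\lVert P_xP_y\rVert_\infty\le d^{\,s(x,y)}/M$ (your second form). It needs an operator argument rather than a monogamy heuristic. The paper expands both ranges over a basis of the shared systems and shows each block operator $K_c=\sum_iA_{i,c}\otimes B_{i,c}^\dagger$ is a contraction, using unitarity of the dilations $U_x,U_y$. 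With private ancillas $G_x$, these dilations also make the $P_x$ genuine projections on a common space, so no Naimark step is needed.

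These last two points only affect constants and rigor; the kernel issue is the real gap.
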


The theorem follows from a finite-block estimate, which we state separately
because the degradable case in Sec.~\ref{sec:degradable} needs it in
quantitative form.

\begin{proposition}\label{prop:finite-block}
There are universal constants $a,C_{\mathrm{app}}>0$ such that the following holds. Let
$\mathcal N:A'\to B$ be antidegradable, put $d:=|B|$, and let $t$ be an
integer with
\begin{equation}
  4C_{\mathrm{app}}\sqrt n\le t\le \frac n2 .
  \label{eq:t-range}
\end{equation}
Then every generalized $(n,M)$ Bell-target scheme for $\mathcal N$ satisfies
\begin{equation}
  F_n\le
  2\exp\!\left(-\frac{a t^2}{n}\right)
  +\left(\sum_{j=0}^{t}\binom nj\right)^{1/4}
  \min\left\{1,\frac{d^{\,t}}{M}\right\}^{1/2}.
  \label{eq:finite-block}
\end{equation}
\end{proposition}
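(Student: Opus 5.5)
We use the symmetric extension $\widetilde{\mathcal N}:A'\to B_0B_1$ from \eqref{eq:two-output-extension}, applied independently to each of the $n$ channel uses. Fix a generalized scheme $(\rho_{RA'^{\,n}},\mathcal D_n)$ and write $\sigma:=(\operatorname{id}_R\otimes\widetilde{\mathcal N}^{\otimes n})(\rho_{RA'^{\,n}})$. This is a state on $R\otimes\bigotimes_j B_{j,0}B_{j,1}$.

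\textbf{Placements and projectors.} For each $x\in\{0,1\}^n$ we use a Stinespring dilation of $\mathcal D_n$ acting on $B_x=\bigotimes_j B_{j,x_j}$. This gives an operator $P_x$ on $R\otimes\bigotimes_j B_{j,0}B_{j,1}$ (with the environment ancillas absorbed), namely the Heisenberg-picture pullback of $\Phi_{R\widehat S}$. Concretely, $P_x=\mathcal D_{x}^{\dagger}(\Phi_{R\widehat S})$, so $0\le P_x\le I$. By the marginal property \eqref{eq:two-output-extension}, every placement sees the same state on $RB_x$, so $\trace[\sigma P_x]=F_n$ for all $x$.

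To exploit quantum incompatibility we want genuine projectors with controlled overlaps. We therefore dilate each decoder to an isometry $U_x$ into $\widehat S G_x$ with fresh ancillas, and take $\Pi_x:=U_x^\dagger(\Phi_{R\widehat S}\otimes I)U_x$ restricted appropriately. The technically cleaner route is a Naimark/ancilla embedding in which each placement uses its own ancilla. Then $\Pi_x$ is a projector of the form $I_{\overline{B_x}}\otimes\Pi'_{RB_x}$, where $\Pi'$ has the maximally entangled structure in $R$.

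\textbf{Pairwise overlap.} Suppose $x,y$ agree on a set $T$ with $|T|=k$. Then $\Pi_x$ and $\Pi_y$ act on disjoint output systems except for $B_T$, and both are "maximally entangled with $R$." The key estimate is
\[
\|\Pi_x\Pi_y\|^2\le \min\{1,d^{k}/M\}.
\]
The heuristic is monogamy: $R$ cannot be maximally entangled with both $\widehat S_x$ and $\widehat S_y$ unless the entanglement is routed through $B_T$, whose dimension is $d^k$. Formally, we write $\Pi_x=U_x^\dagger(\Phi\otimes I)U_x$ and compute $\|\Pi_x\Pi_y\Pi_x\|$ via the partial trace over $R$ of the product of two maximally entangled projectors, which contributes a factor $1/M$ times the rank through the shared systems, at most $d^k$.

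\textbf{Averaging via an approximating polynomial.} We have $F_n=\trace[\sigma\,\tfrac{1}{2^n}\sum_x \Pi_x]\le\|\frac1{2^n}\sum_x\Pi_x\|$. Bounding this norm using only pairs with small agreement is where $\mathrm{NOR}$ enters. Let $p$ be a polynomial of degree $\le C_{\mathrm{app}}\sqrt n$-scale (more precisely, a polynomial in the Hamming distance) that approximates $\mathrm{NOR}$. We choose it so that the weight kernel $K(x,y)=q(|x\oplus y|)$ is supported on pairs with $|x\oplus y|\ge n-t$, i.e. agreement at most $t$, and so that $\sum_y K(x,y)$ is close to $1$. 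The error of the approximation is $2\exp(-at^2/n)$, using Chebyshev/Paturi-type approximate-degree bounds for $\mathrm{NOR}$ once $t\ge 4C_{\mathrm{app}}\sqrt n$. We then write
\[
F_n\approx \mathbb E_x\trace[\sigma\Pi_x]=\mathbb E_{x}\sum_y K(x,y)\trace[\sigma\Pi_x]+\text{error}.
\]
We bound the main term with operator Cauchy--Schwarz, $\trace[\sigma\Pi_x\Pi_y]$-type terms, and $\Pi_x=\Pi_x^2$. This gives a quantity controlled by $\sum_y|K(x,y)|\,\|\Pi_x\Pi_y\|$. The number of $y$ in the support is at most $\sum_{j\le t}\binom nj$, and each overlap is at most $\min\{1,d^t/M\}^{1/2}$. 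Two applications of Cauchy--Schwarz, one to pass from $\trace[\sigma\Pi_x]$ to $\trace[\sigma\Pi_x\Pi_y\Pi_x]^{1/2}$ and one over the kernel, produce the exponents $1/4$ on the counting factor and $1/2$ on the overlap factor, matching \eqref{eq:finite-block}.

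\textbf{Main obstacle.} The main obstacle is the kernel step. We need a weight matrix that is simultaneously supported on low-agreement pairs, has controlled $\ell_1$/$\ell_2$ mass, and reproduces $F_n$ up to the $2e^{-at^2/n}$ term. We would first try the Fourier (Krawtchouk) expansion of a low-degree symmetric approximant of $\mathrm{NOR}_n$, transported to a convolution kernel on $\{0,1\}^n$, and verify the sign and mass bounds through Krawtchouk orthogonality.
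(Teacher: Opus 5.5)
\textbf{There is a genuine gap.} Your architecture matches the paper's: a swap-symmetric extension, $2^n$ decoder placements with private ancillas so each $P_x$ is a genuine projector, $F_n\le\lVert 2^{-n}\sum_xP_x\rVert_\infty$, and a signed kernel built from a low-degree $\mathrm{NOR}_n$ approximant. But the two load-bearing steps are not correct as you state them.

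\emph{Pairwise overlap.} Your estimate $\lVert\Pi_x\Pi_y\rVert^2\le\min\{1,d^k/M\}$ is off by a square, and you support it only heuristically. The Proposition needs the linear bound $\lVert P_xP_y\rVert_\infty\le d^{s(x,y)}/M$ for $x\neq y$. With identity decoders, $M=d^n$ and $y=\mathds 1\oplus x$, one has $\lVert P_xP_y\rVert_\infty=1/M$ exactly, so your version is loose. Feeding $\eta=\min\{1,d^t/M\}^{1/2}$ into the averaging argument below gives $\min\{1,d^t/M\}^{1/4}$ instead of the power $1/2$. That is still an exponential converse, but not the stated inequality; your ``two Cauchy--Schwarz'' exponent count does not close. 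The paper proves the linear bound as follows. Write unit vectors in the two ranges as $V_x\ket{\alpha}$ and $V_y\ket{\beta}$, expanding $\ket{\Phi_M}$ and an orthonormal basis $\{\ket{c}\}_{c=1}^{D}$ of the shared system $C_{xy}$, with $D=d^{s(x,y)}$. Contracting the $R$ and $C_{xy}$ indices gives $\braket{\psi|\phi}=\frac1M\sum_{c=1}^{D}\bra{\alpha}K_c\ket{\beta}$ with $K_c=\sum_iA_{i,c}\otimes B_{i,c}^\dagger$. Block unitarity, $\sum_iA_{i,c}^\dagger A_{i,c}=I$ and $\sum_iB_{i,c}^\dagger B_{i,c}=I$, factors $K_c$ as a co-isometry times an isometry. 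Hence $\lVert K_c\rVert_\infty\le1$ and $|\braket{\psi|\phi}|\le D/M$.

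\emph{Averaging.} This step is missing its mechanism. Writing $\mathbb E_x\trace[\sigma\Pi_x]=\mathbb E_x\sum_yK(x,y)\trace[\sigma\Pi_x]$ involves only single projectors, and you do not explain how pair terms arise. Moreover, ``row sums close to $1$'' is the wrong requirement. What works is a comparison of two quadratic forms on the positive semidefinite Gram matrix $G_{xy}:=2^{-n}\bra{v}P_xP_y\ket{v}$, where $\ket{v}$ is a top eigenvector of $A=2^{-n}\sum_xP_x$ with eigenvalue $\lambda$. Then $\trace G=\lambda$ and $\trace[UG]=\lambda^2$ for $U=2^{-n}\mathbf 1\mathbf 1^{\mathsf T}$. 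Take a real symmetric $W$ with $W\mathbf 1=\mathbf 1$, the \emph{spectral} bound $\lVert W-U\rVert_\infty\le\zeta$, $\max_x\sum_y|W_{xy}|\le L$, and $\lVert P_xP_y\rVert_\infty\le\eta$ wherever $W_{xy}\neq0$. Then $\lambda^2=\trace[WG]-\trace[(W-U)G]\le L\eta+\zeta\lambda$, hence $\lambda\le\zeta+\sqrt{L\eta}$.

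\emph{The kernel.} Your $K(x,y)=q(|x\oplus y|)$, with $q$ the approximant, does not have the support you claim. The kernel must be the Fourier transform of the approximant evaluated at the complemented difference, $W_{xy}=\widehat q(\mathds 1\oplus x\oplus y)$; your Krawtchouk idea is the symmetric special case of this. Degree at most $t$ gives support on $s(x,y)\le t$, hence $W_{xx}=\widehat q(\mathds 1)=0$, which you need because $\lVert P_xP_x\rVert_\infty=1$. The characters $\rchi_z$ diagonalize $W$ with eigenvalues $(-1)^{|z|}q(z)$. So $q(0)=1$ gives $W\mathbf 1=\mathbf 1$, and $|q(z)|\le2\varepsilon$ for $z\neq0$ gives $\lVert W-U\rVert_\infty\le2\exp(-at^2/n)$. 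Cauchy--Schwarz over the $\sum_{j\le t}\binom nj$ support points plus Parseval gives $L\le(\sum_{j\le t}\binom nj)^{1/2}$. No sign condition is needed. With $\eta=\min\{1,d^t/M\}$ this yields exactly the stated bound.
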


We sketch the idea of the proof here and defer the details to the Supplemental
Material.

\emph{Exponentially many decoder placements.}---Let
$\widetilde{\mathcal N}:A'\to B_0B_1$ be a symmetric two-output extension
and apply it to each of the $n$ uses, producing the state
\begin{align}
  \omega_{R\mathbf B}
  &:=\left(\operatorname{id}_R\otimes\widetilde{\mathcal N}^{\otimes n}\right)
  (\rho_{RA'^{\,n}}),\nonumber\\
  \mathbf B&:=\bigotimes_{j=1}^{n}\left(B_{j,0}\otimes B_{j,1}\right).
  \label{eq:extended-output-state}
\end{align}
Each string $x\in\{0,1\}^n$ selects
$B_x:=\bigotimes_{j=1}^{n}B_{j,x_j}$, and by
Eq.~\eqref{eq:two-output-extension}
\begin{equation}
  \trace_{\mathbf B\setminus B_x}\left[\omega_{R\mathbf B}\right]
  =\left(\operatorname{id}_R\otimes\mathcal N^{\otimes n}\right)
  (\rho_{RA'^{\,n}})
  \label{eq:marginal-invariance}
\end{equation}
for \emph{every} $x$: each selection reproduces exactly the output the
receiver would have seen. A copy $\mathcal D_x$ of the original decoder may
therefore be placed on $B_x$, for all $2^n$ strings at once.

\emph{Fidelity projectors.}---Give each placement a private ancilla $G_x$
prepared in $\ket 0$ and a unitary dilation
$U_x:B_xG_x\to\widehat S E_x$ of $\mathcal D_x$. Testing the decoded state
against $\Phi_{R\widehat S}$ then amounts to measuring the projector
\begin{equation}
  P_x:=(I_R\otimes U_x^\dagger)
  \left(\Phi_{R\widehat S}\otimes I_{E_x}\right)
  (I_R\otimes U_x),
  \label{eq:fidelity-projector}
\end{equation}
extended by the identity on the outputs and ancillas that the $x$-th
placement does not touch, so that all $P_x$ act on the common space
\begin{equation}
  \mathcal H_{\mathrm{tot}}
  :=\mathcal H_R\otimes
  \bigotimes_{j=1}^{n}\left(\mathcal H_{B_{j,0}}\otimes\mathcal H_{B_{j,1}}\right)
  \otimes\bigotimes_{y\in\{0,1\}^{n}}\mathcal H_{G_y}.
\end{equation}
Writing $\widetilde\omega$ for the state
$\omega_{R\mathbf B}\otimes\bigotimes_y\ketbra{0}{0}_{G_y}$,
Eq.~\eqref{eq:marginal-invariance} gives
$F_n=\trace[P_x\widetilde\omega]$ for every $x$, whence for the uniform
average $A:=2^{-n}\sum_x P_x$,
\begin{equation}
  F_n=\trace\left[A\,\widetilde\omega\right]\le\lVert A\rVert_\infty .
  \label{eq:fidelity-operator-norm}
\end{equation}
Every placement succeeds with the same probability, so the fidelity is
squeezed by how incompatible the placements are with one another.

\emph{Pairwise incompatibility.}---Two placements interact only through the
outputs they share. Let
$s(x,y):=|\{j:x_j=y_j\}|$. In Lemma~\ref{lem:sm-overlap} in the Supplemental Material we show, by
expanding both maximally entangled states and using unitarity of $U_x,U_y$
block by block on the shared systems,
for every pair of distinct strings $x\ne y$,
\begin{equation}
  \lVert P_xP_y\rVert_\infty
  \le\min\left\{1,\frac{d^{\,s(x,y)}}{M}\right\}.
  \label{eq:projector-overlap}
\end{equation}
This is the quantitative no-cloning statement: two decoders that overlap on
few channel uses cannot both recover a large entangled system, and the
obstruction is governed by the shared dimension $d^{\,s(x,y)}$ in relation to the
code dimension $M$.

\emph{From pairwise to global.}---Equation~\eqref{eq:projector-overlap}
constrains pairs, whereas Eq.~\eqref{eq:fidelity-operator-norm} asks about
all $2^n$ projectors simultaneously; nothing forbids a large average built
from projectors that are only pairwise nearly orthogonal. We bridge the gap
with a weight matrix indexed by bit strings in $\{0,1\}^n$: Suppose a real symmetric 
$W\in\mathbb R^{2^n\times2^n}$ satisfies $W\mathbf 1=\mathbf 1$,
$\lVert W-U\rVert_\infty\le\zeta$ with
$U_{xy} = 2^{-n}$,
$\max_x\sum_y|W_{xy}|\le L$, and $W_{xy}\neq0$ only for pairs with
$\lVert P_xP_y\rVert_\infty\le\eta$. From such $W$, we derive in Lemma~\ref{lem:sm-average} the bound
\begin{equation}
  \lVert A\rVert_\infty\le\zeta+\sqrt{L\eta}.
  \label{eq:W-bound}
\end{equation}
We show in Lemma~\ref{lem:sm-weight} that such a $W$ exists whenever we are willing to discard pairs sharing more than
$t$ coordinates. To that end, we make use of the construction in Ref.~\cite{deWolfNote} of a multilinear polynomial $q$ of degree at most
$t$ with $q(0)=1$ and $|q(z)|\le2\varepsilon$ for $z\neq0$, which exists for
$\varepsilon=\exp(-t^2/4C_{\mathrm{app}}^2n)$, and set
$W_{xy}:=\widehat q(\mathds 1\oplus x\oplus y)$ with $\widehat q$ the
normalized Hadamard transform. Bounded degree makes $\widehat q$ supported
on strings of weight at most $t$, which is exactly the constraint that
$W_{xy}\neq0$ only when $x$ and $y$ agree in at most $t$ places. In
particular $W_{xx}=0$, because $s(x,x)=n>t$, so the diagonal case excluded
from Eq.~\eqref{eq:projector-overlap} is never used. Finally, we show that $W$ and $U$ can be diagonalized simultaneously, giving
$\zeta\le2\varepsilon$; and Cauchy-Schwarz with Parseval gives
$L\le(\sum_{j\le t}\binom nj)^{1/2}$. Substituting into
Eq.~\eqref{eq:W-bound} with $\eta=\min\{1,d^{\,t}/M\}$ proves
Proposition~\ref{prop:finite-block}. 

With Proposition~\ref{prop:finite-block} in place, we can prove Theorem~\ref{thm:antidegradable}:
\begin{proof}[Proof of Theorem~\ref{thm:antidegradable}]
Fix $r>0$ and choose $0<\delta<1/2$ such that
\begin{equation}
  \delta\log d+\tfrac12 h_2(\delta)<r ,
  \label{eq:delta-choice}
\end{equation}
where $h_2$ is the binary entropy. Set $t=\lfloor\delta n\rfloor$. For all
sufficiently large $n$, Eq.~\eqref{eq:t-range} holds and
$t\ge\delta n/2$. Since $t/n\le\delta<1/2$, monotonicity of $h_2$ on
$[0,1/2]$ gives
$\sum_{j\le t}\binom nj\le2^{nh_2(t/n)}\le2^{nh_2(\delta)}$, while
$d^t\le d^{\delta n}$. Proposition~\ref{prop:finite-block} and
$M_n\ge2^{nr}$ therefore imply
\begin{equation}
  F_n\le
  2\cdot 2^{-a\delta^2 n\log e/4}
  +2^{-\frac n2\left[r-\delta\log d-\frac12 h_2(\delta)\right]}.
\end{equation}
Both exponents are positive by Eq.~\eqref{eq:delta-choice}; the explicit
bounds above also account for the floor in $t$. Absorbing the finite
prefactor into the exponent for large $n$ proves
Eq.~\eqref{eq:antidegradable-bound}, for example with any
\begin{equation}
  0<\gamma_r<\min\left\{\frac{a\delta^2\log e}{4},\
  \frac12\left[r-\delta\log d-\tfrac12h_2(\delta)\right]\right\}.
\end{equation}
Since $r>0$ was arbitrary, every positive rate is an exponential
strong-converse rate and $Q^{\exp(\dagger)}(\mathcal N)=0$.
\end{proof}

\begin{remark}
The bound is uniform over encoders and decoders and uses no structure of the
code; in particular it applies to non-stabilizer codes and to input states
whose reference marginal is far from maximally mixed. The only channel
parameter entering is the output dimension $|B|$.
\end{remark}

% =====================================================================
\section{Degradable channels}
\label{sec:degradable}
% =====================================================================

Morgan and Winter reduced the strong converse for degradable channels to a
strong converse for their self-complementary (``symmetric'') zero-capacity
channel~\cite{MorganWinter2014}. This associated channel is antidegradable,
so Theorem~\ref{thm:antidegradable} supplies the missing bound. However, the
reduction is quantitative, and closing it requires tracking how the error
parameter degrades.

\begin{theorem}\label{thm:degradable}
Let $\mathcal N$ be a finite-dimensional degradable channel. For every
$\Delta>0$ there is a $\gamma_\Delta>0$ such that every generalized
$(n,M_n)$ Bell-target scheme with
$r_n\ge Q(\mathcal N)+\Delta$ obeys $F_n\le2^{-\gamma_\Delta n}$ for all
sufficiently large $n$. Consequently
\begin{equation}
  Q^{\exp(\dagger)}(\mathcal N)=Q^\dagger(\mathcal N)
  =Q(\mathcal N)=Q^{(1)}(\mathcal N).
  \label{eq:degradable-capacity}
\end{equation}
\end{theorem}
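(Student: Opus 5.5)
We follow the Morgan--Winter reduction~\cite{MorganWinter2014} and replace its missing ingredient by Proposition~\ref{prop:finite-block}. Let $V:A'\to BE$ be a Stinespring isometry of the degradable channel $\mathcal N$ with degrading map $\mathcal D:B\to E$, and let $W_{\mathcal D}:B\to E F$ be a Stinespring isometry of $\mathcal D$. Morgan and Winter construct from these a self-complementary channel $\mathcal S$ (output of dimension $d_{\mathcal S}$ depending only on $|B|,|E|,|F|$). Their key quantitative statement is a one-shot bound of the form
\begin{equation}
  \log N_{\mathrm E}(n,\varepsilon\mid\mathcal N)
  \le n\,Q^{(1)}(\mathcal N)+\log N_{\mathrm E}(n,\varepsilon'\mid\mathcal S^{\otimes}\text{-type scheme})+O(\sqrt n\,g(\varepsilon)),
\end{equation}
that is, any $(n,M)$ code for $\mathcal N$ is split, via an entropic/typicality argument on the state $\omega_{RB^nE^n}$, into a part of rate at most $Q^{(1)}(\mathcal N)=\max(H(B)-H(E))$ plus a generalized Bell-target scheme for $\mathcal S$ of dimension $M'\approx M\,2^{-n(Q^{(1)}+\mathrm{slack})}$ with error $\varepsilon'$ controlled by $\varepsilon$. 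I will first restate this reduction in the form: \emph{a scheme for $\mathcal N$ at rate $Q+\Delta$ with fidelity $F_n$ yields a generalized $(n,M_n')$ Bell-target scheme for $\mathcal S$ with $M_n'\ge 2^{n\Delta/2}$ and fidelity $F_n'\ge \mathrm{poly}(F_n)\cdot 2^{-n\Delta/8}$} (or similar), which is exactly why the generalized setting of Theorem~\ref{thm:antidegradable} is needed: the reduced reference marginal is not maximally mixed.

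The steps in order: (i) use degradability to write $H(E^n)\le H(B^n)$ type inequalities per block and, via a smoothing/typical-projector argument on $\omega_{B^n}$ and $\omega_{E^n}$, bound how much of $R$ can be correlated with $B^n$ beyond $nQ^{(1)}$; the additivity $Q^{(1)}(\mathcal N^{\otimes n})=nQ^{(1)}(\mathcal N)$ for degradable channels is used here. (ii) Convert the remaining correlations into a Bell-target scheme for $\mathcal S$, applying the decoder on the symmetric outputs. (iii) Since $\mathcal S$ is self-complementary, hence antidegradable, apply Proposition~\ref{prop:finite-block} with $t=\lfloor\delta n\rfloor$ and $\delta$ chosen as in the proof of Theorem~\ref{thm:antidegradable} for rate $\Delta/2$ and dimension $d_{\mathcal S}$, yielding $F_n'\le 2^{-\gamma n}$. (iv) Combine: $F_n\le \mathrm{poly}^{-1}(2^{-\gamma n}2^{n\Delta/8})$, and choose the slack in step (i) small relative to $\gamma$ so the exponent stays negative, giving $\gamma_\Delta$. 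The capacity chain \eqref{eq:degradable-capacity} then follows from \eqref{eq:capacity-hierarchy} and $Q=Q^{(1)}$~\cite{DevetakShor2005}.

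The main obstacle is step (i)/(iv): the Morgan--Winter reduction is phrased for fixed $\varepsilon<1$, whereas we need it in the regime $F_n\to0$ exponentially, so the loss in fidelity and the $O(\sqrt n)$ rate slack must be tracked as functions of $F_n$. I would first try to redo their argument with smooth min-/max-entropies whose smoothing parameter is tied to $F_n$ (e.g.\ $\varepsilon_n=\sqrt{1-F_n}$ with $\log(1/(1-\varepsilon_n))=O(\log 1/F_n)$), aiming for penalties linear in $\log(1/F_n)$ rather than polynomial in $1/(1-\varepsilon)$; an additive $O(\log 1/F_n)$ loss in rate, balanced against the exponential decay from Proposition~\ref{prop:finite-block}, then still closes by contradiction if $F_n\ge 2^{-\gamma_\Delta n}$ with $\gamma_\Delta$ small enough.
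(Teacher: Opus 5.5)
\paragraph{Assessment.} Your overall route is the paper's: the Morgan--Winter reduction to a self-complementary (hence antidegradable) channel ($\mathcal M$ in the paper, your $\mathcal S$), the generalized Bell-target setting for the non-maximally-mixed reference, the finite-block antidegradable bound, and a contradiction at fidelity $2^{-cn}$.

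\paragraph{The missing quantitative input.} The gap is the step you yourself flag as the main obstacle. You leave it as a plan to redo Morgan--Winter with fidelity-dependent smoothing, but no rederivation is needed. Theorem~19 of Ref.~\cite{MorganWinter2014} already holds for every $0\le\varepsilon<1$ with explicit error dependence: $\log N_{\mathrm E}(n,\varepsilon\mid\mathcal N)\le nQ^{(1)}(\mathcal N)+\delta_n(\lambda)+\log N_{\mathrm E}(n,1-\lambda\mid\mathcal M)$, where $\lambda=(1-\varepsilon)/5$ and $\delta_n(\lambda)=\mu\sqrt{n\ln(64n|A'|^2/\lambda^2)}+8\log(1/\lambda)+O(\log n)$. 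If $F_n\ge 2^{-cn}$, then $\lambda_n\ge F_n/10\ge 2^{-cn}/10$, and the per-use slack is at most $\mu\sqrt{2c\ln 2}+8c$. Because of the square-root term this is of order $\sqrt c$, not the $O(c)$ your target ``penalties linear in $\log(1/F_n)$'' would give. That target is neither what the existing bound delivers nor what is needed: it suffices that the slack vanishes as $c\downarrow 0$.

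\paragraph{Your placeholder reduction would not close.} The form $F_n'\ge\mathrm{poly}(F_n)\,2^{-n\Delta/8}$ cannot work. Any exponent Proposition~\ref{prop:finite-block} yields at rate $\Delta/2$ comes from its first term, of order $a\delta^2$. Here $h_2(\delta)<\Delta$ and $\delta<1/2$ are forced by decay of the second term. Such an exponent is always smaller than $\Delta/8$, so your step (iv) would produce no decay. The fidelity loss in the reduction must be tied to the free parameter $c$, not to $\Delta$. In Morgan--Winter it is only a constant factor: the $\mathcal M$-scheme has error $1-\lambda$, hence fidelity at least $\lambda\ge F_n/10$. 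All dependence on $F_n$ sits in the rate slack.

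\paragraph{How the paper finishes.} The paper packages your steps (iii)--(iv) as Lemma~\ref{lem:symmetric-rate}. Take $t=\lceil K\sqrt c\,n\rceil$ with $K=\sqrt{2\ln 2/a}$. Then the first term of Proposition~\ref{prop:finite-block} is at most $2\cdot 2^{-2cn}$, i.e.\ at most half of $2^{-cn}/10$. Hence any antidegradable scheme with fidelity at least $2^{-cn}/10$ has asymptotic rate at most $g_{\mathcal M}(c)=\frac12 h_2(K\sqrt c)+K\sqrt c\log d_{\mathcal M}+2c$. Now choose $c$ with $\mu\sqrt{2c\ln 2}+8c+g_{\mathcal M}(c)<\Delta$. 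This contradicts rate $Q^{(1)}(\mathcal N)+\Delta$ along any infinite subsequence with $F_n\ge 2^{-cn}$, so one may take $\gamma_\Delta=c$.
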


\begin{proof}
Theorem~19 of Ref.~\cite{MorganWinter2014} provides a finite-dimensional
self-complementary channel $\mathcal M$, with output dimension
$d_{\mathcal M}$, and a constant $\mu>0$ such that, for every
$0\le\varepsilon<1$,
\begin{align}
  \log N_{\mathrm E}(n,\varepsilon\mid\mathcal N)
  \le\;& nQ^{(1)}(\mathcal N)+\delta_n(\lambda)
  \nonumber\\
  &+\log N_{\mathrm E}(n,1-\lambda\mid\mathcal M),
  \label{eq:mw-reduction}
\end{align}
where $\lambda=(1-\varepsilon)/5$ and
\begin{equation}
  \delta_n(\lambda):=
  \mu\sqrt{n\ln\!\left(\frac{64n|A'|^2}{\lambda^2}\right)}
  +8\log\frac1\lambda+O(\log n).
  \label{eq:mw-correction}
\end{equation}
Let $K,c_0$ be the constants of Lemma~\ref{lem:symmetric-rate} below and define,
for $0<c\le c_0$,
\begin{equation}
  \Psi(c):=\mu\sqrt{2c\ln2}+10c
  +\tfrac12h_2(K\sqrt c)+K\sqrt c\log d_{\mathcal M}.
  \label{eq:Psi-def}
\end{equation}
Since $\Psi(c)\to0$ as $c\downarrow0$, choose $0<c\le c_0$ such that
$\Psi(c)<\Delta$.

Suppose (aiming for a contradiction), that there is an infinitely large set $\mathcal I \subset \mathbb{N}$ of block lengths $n$ for which there is a generalized scheme with
$r_n\ge Q^{(1)}(\mathcal N)+\Delta$ and $F_n\ge2^{-cn}$. Put
$\varepsilon_n=\sqrt{1-F_n}$. Then
\begin{equation}
  \lambda_n:=\frac{1-\varepsilon_n}{5}
  =\frac{F_n}{5(1+\sqrt{1-F_n})}
  \ge\frac{F_n}{10}\ge\frac{2^{-cn}}{10}.
  \label{eq:lambda-bound}
\end{equation}
For each $n\in\mathcal I$, the given scheme is counted by
$N_{\mathrm E}(n,\varepsilon_n\mid\mathcal N)$, so we apply
Eq.~\eqref{eq:mw-reduction} at $\varepsilon=\varepsilon_n$. Along this
violating subsequence,
\begin{equation}
  \limsup_{\substack{n\to\infty\\ n\in\mathcal I}}
  \frac{\delta_n(\lambda_n)}{n}
  \le \mu\sqrt{2c\ln2}+8c,
\end{equation}
because $\log(1/\lambda_n)\le cn+\log10$. The proof of Lemma~\ref{lem:symmetric-rate} is pointwise, so it also
applies along the same subsequence and gives
\begin{align}
  \limsup_{\substack{n\to\infty\\ n\in\mathcal I}}
  \frac1n\log N_{\mathrm E}(n,1-\lambda_n\mid\mathcal M)
  &\le \tfrac12h_2(K\sqrt c)\nonumber\\
  &\quad+K\sqrt c\log d_{\mathcal M}+2c.
\end{align}
For every $n\in\mathcal I$, feasibility of the given scheme implies
\[
M_n\le
N_{\mathrm E}(n,\varepsilon_n\mid\mathcal N).
\]
Consequently, Eq.~\eqref{eq:mw-reduction} gives
\begin{align*}
Q^{(1)}(\mathcal N)+\Delta
\le & r_n
 =\frac1n\log M_n\\
\le &\frac1n
 \log N_{\mathrm E}(n,\varepsilon_n\mid\mathcal N)\\
\le & Q^{(1)}(\mathcal N)
 +\frac{\delta_n(\lambda_n)}{n}\\
 &+\frac1n\log
 N_{\mathrm E}(n,1-\lambda_n\mid\mathcal M).
\end{align*}
Taking the limsup along $n\in\mathcal I$ and using the two
estimates above therefore yields
\begin{equation}
Q^{(1)}(\mathcal N)+\Delta
\le Q^{(1)}(\mathcal N)+\Psi(c),
\end{equation}
contrary to $\Psi(c)<\Delta$. Hence, for all sufficiently large $n$, every
scheme at rate at least $Q^{(1)}(\mathcal N)+\Delta$ satisfies
$F_n<2^{-cn}$. Thus one may take $\gamma_\Delta=c$. Finally,
$Q^{(1)}(\mathcal N)=Q(\mathcal N)$ for degradable channels
\cite{DevetakShor2005}, and Eq.~\eqref{eq:capacity-hierarchy} closes the
chain.
\end{proof}
Above proof relies on the following Lemma:
\begin{lemma}\label{lem:symmetric-rate}
There are universal constants $K>0$ and $c_0>0$ with the following
property. Let $\mathcal M$ be antidegradable with output dimension 
$d_{\mathcal M}$, let $0<c\le c_0$, and let
$\lambda_n\ge2^{-cn}/10$. Then 
\begin{equation}
  \limsup_{n\to\infty}\frac1n\log N_{\mathrm E}(n,1-\lambda_n\mid\mathcal M)
  \le g_{\mathcal M}(c),
  \label{eq:symmetric-rate-bound}
\end{equation}
where
$g_{\mathcal M}(c):=\frac12h_2(K\sqrt c)+K\sqrt c\log d_{\mathcal M}+2c$
and $g_{\mathcal M}(c)\to0$ as $c\downarrow0$.
\end{lemma}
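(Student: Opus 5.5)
We derive the lemma directly from Proposition~\ref{prop:finite-block}, applied with a block parameter $t$ tuned to $c$. Fix $n$ and suppose a generalized $(n,M)$ Bell-target scheme for $\mathcal M$ reaches purified-distance error at most $1-\lambda_n$. Then $\sqrt{1-F_n}\le 1-\lambda_n$, so $F_n\ge 1-(1-\lambda_n)^2\ge\lambda_n\ge 2^{-cn}/10$. We compare this lower bound with the upper bound of Eq.~\eqref{eq:finite-block} at $d=d_{\mathcal M}$. The aim is to pick $t=\lfloor\kappa n\rfloor$ with $\kappa=K\sqrt c$ so that the Gaussian term $2\exp(-at^2/n)$ is negligible against $2^{-cn}/10$. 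That choice forces the second term to carry the fidelity, and from this we read off an upper bound on $\log M$.

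\textbf{Step 1: the Gaussian term.} With $t\ge \kappa n/2$ for large $n$, the first term satisfies
\[
2\exp(-at^2/n)\le 2\cdot 2^{-a\kappa^2 n\log e/4}=2\cdot2^{-aK^2 c\, n\log e/4}.
\]
We choose the universal constant $K$ so that $aK^2\log e/4\ge 2$. The first term is then at most $2\cdot 2^{-2cn}$, which is smaller than $2^{-cn}/20$ for large $n$.

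\textbf{Step 2: range of $t$.} We choose $c_0$ small enough that $\kappa=K\sqrt c\le K\sqrt{c_0}<1/2$. This makes Eq.~\eqref{eq:t-range} hold for all large $n$, because $4C_{\mathrm{app}}\sqrt n\le\lfloor\kappa n\rfloor\le n/2$ eventually. It also gives $h_2(t/n)\le h_2(\kappa)$ by monotonicity on $[0,1/2]$.

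\textbf{Step 3: extracting $M$.} Combining the two bounds gives
\[
\frac{2^{-cn}}{20}\le 2^{nh_2(\kappa)/4}\min\{1,d_{\mathcal M}^{t}/M\}^{1/2}.
\]
If the minimum equals $1$, then $M\le d_{\mathcal M}^{\kappa n}$ directly. Otherwise, squaring yields
\[
M\le 400\cdot 2^{2cn}\,2^{nh_2(\kappa)/2}\,d_{\mathcal M}^{\kappa n}.
\]
In both cases
\[
\tfrac1n\log M\le \tfrac12h_2(K\sqrt c)+K\sqrt c\log d_{\mathcal M}+2c+O(1/n),
\]
uniformly over schemes. Taking the supremum over schemes bounds $\tfrac1n\log N_{\mathrm E}(n,1-\lambda_n\mid\mathcal M)$, and the limsup gives Eq.~\eqref{eq:symmetric-rate-bound}. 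Finally, $g_{\mathcal M}(c)\to0$ because $h_2$ is continuous at $0$.

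\textbf{Main obstacle.} The delicate point is the bookkeeping of constants. $K$ must depend only on the universal $a$, and $c_0$ only on $K$, so that both are channel-independent. The floor in $t$ and the $O(\sqrt n)$ lower limit on $t$ must be absorbed for large $n$. The argument is also pointwise in $n$: it uses only $\lambda_n\ge2^{-cn}/10$ at each $n$. It therefore applies equally along any subsequence, which is exactly what the proof of Theorem~\ref{thm:degradable} requires.
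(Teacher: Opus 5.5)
Your proposal is correct and follows essentially the same route as the paper: apply Proposition~\ref{prop:finite-block} with $t\approx K\sqrt{c}\,n$, choose $K$ so the Gaussian term is at most $2\cdot 2^{-2cn}$, and then solve the surviving second term for $M$ (using $M\le d_{\mathcal M}^{\,t}$ when the minimum equals one). The differences are only cosmetic. You use a floor where the paper uses a ceiling, and your crude bound $t\ge \kappa n/2$ makes your $K$ twice the paper's $\sqrt{2\ln 2/a}$. The lemma only asserts that some universal $K$ and $c_0$ exist, so this is harmless.
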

Here $N_E(n,\varepsilon\mid\mathcal M)$ denotes the largest
dimension $M$ for which a generalized $(n,M)$ Bell-target scheme
for $\mathcal M$ has purified-distance error at most $\varepsilon$. The proof can be found in the Supplemental Material.

The quantum erasure channel is one of the few noisy channels whose quantum
capacity is known exactly~\cite{BennettDiVincenzoSmolin1997}. Nevertheless,
an all-code strong converse remained open: previous results covered
maximally entangled inputs~\cite{SharmaWarsi2013} or almost all
codes~\cite{WildeWinter2014}. Since the channel is degradable for
$p\leq1/2$ and antidegradable for $p\geq1/2$,
Theorems~\ref{thm:degradable} and~\ref{thm:antidegradable} close this gap
and yield an all-code exponential strong converse throughout the full
parameter range.
\begin{corollary}[Erasure channel]\label{cor:erasure}
Let
\begin{equation}
  \mathcal E_{p,d}(\rho):=(1-p)\rho\oplus p\,\trace(\rho)\ketbra ee
  \label{eq:erasure-channel}
\end{equation}
be the $d$-dimensional erasure channel, with $\ket e$ an erasure flag
orthogonal to the input space. Then for every $p\in[0,1]$,
\begin{equation}
  Q^{\exp(\dagger)}(\mathcal E_{p,d})=Q(\mathcal E_{p,d})
  =\max\{1-2p,0\}\log d.
  \label{eq:erasure-strong-converse}
\end{equation}
\end{corollary}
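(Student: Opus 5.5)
The corollary splits into two regimes according to the structure of $\mathcal E_{p,d}$, and in each we combine the known capacity formula with the appropriate converse theorem. First we write down a Stinespring isometry. For input $\ket\psi$, set $V\ket\psi=\sqrt{1-p}\,\ket\psi_B\ket e_E+\sqrt p\,\ket e_B\ket\psi_E$, with $B,E\simeq\mathbb C^d\oplus\mathbb C\ket e$. Tracing out $B$ shows that the complementary channel is $\mathcal E^c_{p,d}=\mathcal E_{1-p,d}$, up to the identification $E\simeq B$.

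\emph{Degradable and antidegradable regimes.} For $p\le 1/2$, we compose $\mathcal E_{p,d}$ with a further erasure $\mathcal E_{q,d}$, extended to fix $\ket e$, where $(1-p)(1-q)=p$, i.e.\ $q=(1-2p)/(1-p)\in[0,1]$. This gives $\mathcal E_{q,d}\circ\mathcal E_{p,d}=\mathcal E_{1-p,d}=\mathcal E^c_{p,d}$, so the channel is degradable. For $p\ge1/2$, the same construction with roles exchanged, $q'=(2p-1)/p$, gives $\mathcal E_{p,d}=\mathcal E_{q',d}\circ\mathcal E_{1-p,d}=\mathcal A\circ\mathcal E^c_{p,d}$, so the channel is antidegradable. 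The endpoint $p=1/2$ lies in both regimes, and $p\in\{0,1\}$ is covered trivially by the same formulas.

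\emph{Capacity values.} For $p\ge1/2$, Theorem~\ref{thm:antidegradable} gives $Q^{\exp(\dagger)}(\mathcal E_{p,d})=0$. Together with Eq.~\eqref{eq:capacity-hierarchy} this yields $Q=0=\max\{1-2p,0\}\log d$. For $p\le1/2$, Theorem~\ref{thm:degradable} gives $Q^{\exp(\dagger)}=Q=Q^{(1)}$, so it remains to compute $Q^{(1)}(\mathcal E_{p,d})$. For an input $\rho_{A'}$ purified by $\psi_{RA'}$, the output satisfies $H(B)=h_2(p)+(1-p)H(\rho)$, since $B$ is block diagonal with flagged blocks. Likewise $H(E)=h_2(p)+pH(\rho)$. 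Hence $\Ic=(1-2p)H(\rho)$, which is maximized by the maximally mixed input at value $(1-2p)\log d$. This reproduces the known result of Ref.~\cite{BennettDiVincenzoSmolin1997}.

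The main (mild) obstacle is checking that the concatenated erasure maps are genuine channels on the enlarged space $\mathbb C^d\oplus\mathbb C\ket e$, which requires them to act trivially on the flag. It also requires confirming that the complementary channel's identification of $E$ with $B$ is legitimate; both follow from the explicit isometry. Everything else is a direct invocation of the two theorems.
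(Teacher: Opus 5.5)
Your proposal is correct and follows the paper's proof. Both identify $\mathcal E^c_{p,d}$ with $\mathcal E_{1-p,d}$, conclude degradability for $p\le1/2$ and antidegradability for $p\ge1/2$, and then apply Theorem~\ref{thm:degradable} or Theorem~\ref{thm:antidegradable}. The only difference is that you verify explicitly what the paper cites from Ref.~\cite{BennettDiVincenzoSmolin1997}: the degrading and antidegrading maps (via concatenated erasures) and the capacity value (via $\Ic=(1-2p)H(\rho)$).
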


\begin{proof}
The complementary channel of $\mathcal E_{p,d}$ is unitarily equivalent to
$\mathcal E_{1-p,d}$, so the channel is degradable for $p\le1/2$ and
antidegradable for $p\ge1/2$~\cite{BennettDiVincenzoSmolin1997}. Apply
Theorem~\ref{thm:degradable} in the first case and
Theorem~\ref{thm:antidegradable} in the second.
\end{proof}

This upgrades the almost-all-codes strong converse of
Ref.~\cite{WildeWinter2014} to a statement about every code, and strengthens
it from vanishing fidelity to exponentially vanishing fidelity.

% =====================================================================
\section{Post-processing and arbitrary channels}
\label{sec:postprocessing}
% =====================================================================
If two channels are concatenated, we find the following:
\begin{proposition}\label{prop:postprocessing}
Suppose $\mathcal N=\mathcal R\circ\widehat{\mathcal N}$ for channels
$\widehat{\mathcal N}:A'\to\widehat B$ and $\mathcal R:\widehat B\to B$.
Then every exponential strong-converse rate for $\widehat{\mathcal N}$ is
one for $\mathcal N$; in particular
$Q^{\exp(\dagger)}(\mathcal N)\le Q^{\exp(\dagger)}(\widehat{\mathcal N})$.
\end{proposition}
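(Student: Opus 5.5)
The plan is a direct code-simulation argument: any scheme for $\mathcal N$ can be turned into a scheme for $\widehat{\mathcal N}$ with identical block length, code dimension and fidelity, by absorbing the post-processing $\mathcal R$ into the decoder. The exponential bound for $\widehat{\mathcal N}$ then transfers verbatim.

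\emph{Step 1: simulation.} Take an $(n,M)$ entanglement-generation code $(\mathcal E_n,\mathcal D_n)$ for $\mathcal N$, or more generally a generalized Bell-target scheme $(\rho_{RA'^{\,n}},\mathcal D_n)$. Since the channel uses act in parallel,
\begin{equation*}
\mathcal N^{\otimes n}=\mathcal R^{\otimes n}\circ\widehat{\mathcal N}^{\otimes n}.
\end{equation*}
Define the decoder $\widehat{\mathcal D}_n:=\mathcal D_n\circ\mathcal R^{\otimes n}:\widehat B^{n}\to\widehat S$, which is again a CPTP map. Then the pair $(\mathcal E_n,\widehat{\mathcal D}_n)$, or $(\rho_{RA'^{\,n}},\widehat{\mathcal D}_n)$, is an $(n,M)$ scheme for $\widehat{\mathcal N}$. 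By Eq.~\eqref{eq:decoded-state} it produces exactly the same output state $\omega^{(n)}_{R\widehat S}$, because
\begin{equation*}
\operatorname{id}_R\otimes\widehat{\mathcal D}_n\circ\widehat{\mathcal N}^{\otimes n}
=\operatorname{id}_R\otimes\mathcal D_n\circ\mathcal N^{\otimes n}.
\end{equation*}
Hence the fidelity $F_n$ of Eq.~\eqref{eq:fidelity} and the rate $r_n$ are unchanged. The encoder is untouched, so the construction stays within the same class of codes: standard or generalized, with arbitrary collective decoders.

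\emph{Step 2: transfer of the bound.} Let $r$ be an exponential strong-converse rate for $\widehat{\mathcal N}$ and fix $r'>r$. The definition supplies a $\gamma>0$ such that every code sequence for $\widehat{\mathcal N}$ with $\liminf_n r_n\ge r'$ has $F_n\le2^{-\gamma n}$ for large $n$. Now take any sequence of codes for $\mathcal N$ with $\liminf_n r_n\ge r'$. Applying Step 1 to each $n$ gives a sequence for $\widehat{\mathcal N}$ with the same rates and the same fidelities, so $F_n\le 2^{-\gamma n}$ for large $n$, with the same $\gamma$. Thus $r$ is an exponential strong-converse rate for $\mathcal N$. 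Taking the infimum over such $r$ yields $Q^{\exp(\dagger)}(\mathcal N)\le Q^{\exp(\dagger)}(\widehat{\mathcal N})$.

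\emph{Main obstacle.} There is essentially no analytic difficulty. The only point needing care is the quantifier structure of the definition, in particular what "for all large enough $n$" means when the threshold may depend on the sequence. We sidestep this entirely: the map from codes for $\mathcal N$ to codes for $\widehat{\mathcal N}$ acts sequence by sequence and preserves $(n,M_n,F_n)$ exactly. Whatever uniformity the bound for $\widehat{\mathcal N}$ has, whether a threshold $n_r$ uniform over codes as in Theorem~\ref{thm:antidegradable} or one depending on the sequence, is therefore inherited unchanged.
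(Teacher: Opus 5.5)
Your proposal is correct and follows the paper's proof: absorb $\mathcal R^{\otimes n}$ into the decoder via $\widehat{\mathcal D}_n:=\mathcal D_n\circ\mathcal R^{\otimes n}$, so the output state, fidelity and rate are unchanged. Your explicit sequence-by-sequence transfer of the bound simply spells out the quantifier step that the paper leaves implicit.
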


\begin{proof}
Given a scheme $(\rho_{RA'^{\,n}},\mathcal D_n)$ for $\mathcal N$, define
$\widehat{\mathcal D}_n:=\mathcal D_n\circ\mathcal R^{\otimes n}$. Since
$\mathcal N^{\otimes n}=\mathcal R^{\otimes n}\circ\widehat{\mathcal N}^{\otimes n}$,
the two schemes produce identical output states and hence identical
fidelities at identical rates.
\end{proof}

The point is that a channel need be neither degradable nor antidegradable to
inherit an exponential bound: it suffices that it can be decomposed into a sequence of channels, where one of them
is. Combined with Theorem~\ref{thm:degradable}, any degradable
$\widehat{\mathcal N}$ with $\mathcal N=\mathcal R\circ\widehat{\mathcal N}$
makes $Q^{(1)}(\widehat{\mathcal N})$ an all-code exponential
strong-converse rate for $\mathcal N$. This observation can be used to promote known capacity upper bounds into strong-converse bounds at no cost.

\subsection{The antidegradable weight}

To exploit Proposition~\ref{prop:postprocessing} systematically we quantify
the largest antidegradable part of a channel, following the squeezing
construction of Ref.~\cite{ZhuZhuWang2024}:
\begin{equation}
  \begin{aligned}
    w_{\mathrm{AD}}(\mathcal N):=\max_{w,\mathcal A,\mathcal S}\quad & w\\
    \text{subject to}\quad & \mathcal N=w\mathcal A+(1-w)\mathcal S,\\
    & 0\le w\le1,\ \mathcal A\ \text{antidegradable},\\
    & \mathcal S\ \text{a channel}.
  \end{aligned}
  \label{eq:antidegradable-weight}
\end{equation}
The value $w=0$ is always feasible and the maximum is attained in finite
dimensions.

We can use the antidegradable weight to upper bound the strong-converse rate:
\begin{theorem}\label{thm:universal}
For every finite-dimensional channel $\mathcal N:A'\to B$, with
$q:=1-w_{\mathrm{AD}}(\mathcal N)$,
\begin{equation}
  Q^{\exp(\dagger)}(\mathcal N)\le
  U_{\mathrm{AD}}(\mathcal N):=q\log\min\{|A'|,|B|\}.
  \label{eq:universal-bound}
\end{equation}
\end{theorem}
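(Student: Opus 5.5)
The plan is to reduce to a flagged channel via Proposition~\ref{prop:postprocessing}, and then rerun the decoder-placement argument of Proposition~\ref{prop:finite-block} conditionally on the flags.

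\emph{Reduction to a flagged channel.} Let $(w,\mathcal A,\mathcal S)$ attain $w_{\mathrm{AD}}(\mathcal N)$, and put $D:=\min\{|A'|,|B|\}$. Define
\begin{equation*}
  \widehat{\mathcal N}(\rho):=w\,\mathcal A(\rho)\otimes\ketbra00_F+(1-w)\,\mathcal T(\rho)\otimes\ketbra11_F ,
\end{equation*}
where $F$ is a classical flag register. The channel $\mathcal T$ is chosen according to which dimension is smaller.
\begin{itemize}
\item If $|B|\le|A'|$, take $\mathcal T=\mathcal S$. The post-processing $\mathcal R$ simply discards $F$.
\item If $|A'|<|B|$, take $\mathcal T=\operatorname{id}_{A'}$. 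Here $\mathcal R$ applies $\mathcal S$ when the flag reads $1$ and then discards $F$.
\end{itemize}
In both cases $\mathcal N=\mathcal R\circ\widehat{\mathcal N}$, and the output of $\mathcal T$ has dimension $D$. By Proposition~\ref{prop:postprocessing} it therefore suffices to show that every $r>qD':=q\log D$ is an exponential strong-converse rate for $\widehat{\mathcal N}$.

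\emph{Conditioning on the flag string.} The flag string $f\in\{0,1\}^n$ is i.i.d.\ with $\Pr[f_j=1]=q$, and it is independent of the input. The receiver sees $f$, so a decoder for $\widehat{\mathcal N}^{\otimes n}$ is a classically controlled family $\{\mathcal D_{n,f}\}_f$. Consequently
\begin{equation*}
  F_n=\sum_f \Pr(f)\,F_{n,f},
\end{equation*}
where $F_{n,f}$ is the fidelity of the generalized Bell-target scheme $(\rho_{RA'^{\,n}},\mathcal D_{n,f})$ for the product channel $\bigotimes_j\mathcal N_{f_j}$, with $\mathcal N_0=\mathcal A$ and $\mathcal N_1=\mathcal T$. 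Fix a small $\delta>0$ and split the sum according to $k:=|f|$:
\begin{itemize}
\item Strings with $k>(q+\delta)n$ contribute at most $2^{-\Omega(\delta^2 n)}$, by a Chernoff bound.
\item For each remaining $f$, I will show $F_{n,f}\le 2^{-\gamma n}$ with $\gamma$ uniform in $f$.
\end{itemize}

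\emph{Placement argument with shared systems (main step).} Fix $f$ with $k\le(q+\delta)n$. Let $m:=n-k\ge(1-q-\delta)n$. If $m$ is sublinear (the case $q$ close to $1$), the claim is trivial: the target rate $r>q\log D$ is then compared against $D^n$, and the ordinary dimension bound $F\le D^{k}/M$ works directly. So assume $m$ is linear in $n$.

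Apply the symmetric two-output extension of $\mathcal A$ on the $m$ antidegradable positions only. Let strings $x\in\{0,1\}^m$ select the outputs there, while the $k$ positions with $f_j=1$ are one common system of dimension $D^k$, shared by every placement. The marginal-invariance identity~\eqref{eq:marginal-invariance} holds as before, so $F_{n,f}\le\lVert 2^{-m}\sum_x P_x\rVert_\infty$.

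The point I must recheck is the analogue of Lemma~\ref{lem:sm-overlap}. Its proof expands both maximally entangled states and uses unitarity block by block on the systems the two placements share. Here the shared systems are the $s(x,y)$ common antidegradable outputs together with the fixed $D^k$-dimensional block, so I expect
\begin{equation*}
  \lVert P_xP_y\rVert_\infty\le\min\{1,D^k d^{\,s(x,y)}/M\},\qquad d=|B|.
\end{equation*}
The weight-matrix Lemmas~\ref{lem:sm-average} and~\ref{lem:sm-weight} are purely combinatorial on $\{0,1\}^m$, so they go through unchanged with $n$ replaced by $m$. This yields
\begin{equation*}
  F_{n,f}\le 2e^{-at^2/m}+\Bigl(\sum_{j\le t}\tbinom mj\Bigr)^{1/4}\bigl(D^kd^t/M\bigr)^{1/2}.
\end{equation*}

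\emph{Choice of parameters.} Take $t=\lfloor\delta' m\rfloor$, exactly as in the proof of Theorem~\ref{thm:antidegradable}, and use $M\ge 2^{nr}$. The second term is then at most
\begin{equation*}
  2^{-\frac n2[r-(q+\delta)\log D-\delta'\log d-\frac12h_2(\delta')]},
\end{equation*}
which is exponentially small once $\delta,\delta'$ are chosen small enough relative to $r-q\log D$. The first term is $2^{-\Omega(n)}$ because $m$ is linear in $n$. Summing over $f$ gives $F_n\le 2^{-\gamma n}$ for large $n$. Hence every $r>q\log D$ is an exponential strong-converse rate for $\widehat{\mathcal N}$, and therefore for $\mathcal N$.

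I expect the only nontrivial verification to be the modified overlap lemma with a common untouched block. Everything else is bookkeeping carried over from Proposition~\ref{prop:finite-block}.
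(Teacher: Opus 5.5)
Your proposal is correct and is essentially the paper's proof. The paper packages your conditional placement argument as Proposition~\ref{prop:flagged}: it conditions on the flag string, uses an overlap bound with shared dimension $d_0^{k}d_1^{s(x,y)}$, applies Hoeffding, and runs Lemmas~\ref{lem:sm-average} and~\ref{lem:sm-weight} on $\{0,1\}^m$. It then invokes this via Proposition~\ref{prop:postprocessing} twice, once with residual $\mathcal S$ and once with residual $\operatorname{id}_{A'}$, and takes the minimum, which is equivalent to your case split on $\min\{|A'|,|B|\}$.

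One thing to tidy is your ``sublinear $m$'' case. The only degenerate situation is $q=1$, where $m=0$ and the dimension bound $D^n/M$ of Lemma~\ref{lem:sm-dimension} applies directly. For $q<1$, just choose $\delta<1-q$, so that $m\ge(1-q-\delta)n$ on the typical event.
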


In order to prove the theorem, we make use of the following bound for flagged channels, which is proven in the Supplemental Material:
\begin{proposition}[Flagged bound]\label{prop:flagged}
Let $\mathcal K_0:A'\to B^{(0)}$ be an arbitrary channel, let
$\mathcal K_1:A'\to B^{(1)}$ be antidegradable, put $d_0:=|B^{(0)}|$, and let
$q+w=1$ with $q,w\ge0$. Then the flagged channel
\begin{equation}
  \mathcal F(\rho):=q\,\mathcal K_0(\rho)\otimes\ketbra00_C
  +w\,\mathcal K_1(\rho)\otimes\ketbra11_C
  \label{eq:flagged-channel}
\end{equation}
satisfies $Q^{\exp(\dagger)}(\mathcal F)\le q\log d_0$.
\end{proposition}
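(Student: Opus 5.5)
We condition on the flag string and reduce to a mixed product channel, to which the decoder-placement argument behind Proposition~\ref{prop:finite-block} still applies.

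\emph{Step 1: condition on the flags.} Fix a generalized $(n,M)$ Bell-target scheme $(\rho_{RA'^{\,n}},\mathcal D_n)$ for $\mathcal F$ with $M\ge 2^{n(q\log d_0+\Delta)}$ for some $\Delta>0$. Since the flag registers $C^n$ are classical and orthogonal, $\mathcal F^{\otimes n}$ splits as a sum over $c\in\{0,1\}^n$. Each term carries weight $p(c)=q^{n-|c|}w^{|c|}$ and channel $\mathcal K_c:=\bigotimes_j\mathcal K_{c_j}$, tensored with $\ketbra{c}{c}$. Define the decoder $\mathcal D_n^{(c)}(\cdot):=\mathcal D_n(\cdot\otimes\ketbra cc)$. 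Linearity of the fidelity in the output state then gives
\begin{equation*}
  F_n=\sum_c p(c)\,F_n^{(c)},
\end{equation*}
where $F_n^{(c)}$ is the fidelity of the generalized scheme $(\rho_{RA'^{\,n}},\mathcal D_n^{(c)})$ for $\mathcal K_c$. Let $k(c)$ be the number of zeros in $c$. By a Chernoff bound, strings with $k(c)>(q+\delta)n$ have total weight at most $2^{-2\delta^2 n\log e}$. It therefore suffices to show that $F_n^{(c)}$ is exponentially small, uniformly over $c$ with $k:=k(c)\le(q+\delta)n$.

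\emph{Step 2: a mixed-product finite-block bound.} Fix such a $c$ and put $m:=n-k$, the number of antidegradable positions. Apply a symmetric two-output extension of $\mathcal K_1$ at those $m$ positions only, and leave the $k$ positions of $\mathcal K_0$ untouched. Each $x\in\{0,1\}^m$ then selects a placement of $\mathcal D_n^{(c)}$ on all $k$ unextended outputs together with one copy at each extended position. Eq.~\eqref{eq:marginal-invariance} holds verbatim, so Eq.~\eqref{eq:fidelity-operator-norm} gives $F_n^{(c)}\le\lVert 2^{-m}\sum_xP_x\rVert_\infty$.

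Two placements $x\ne y$ share all $k$ outputs of $\mathcal K_0$ plus $s(x,y)$ outputs of the extension. The block-by-block argument of Lemma~\ref{lem:sm-overlap} only uses the total dimension of the shared systems, so I expect it to give
\begin{equation*}
  \lVert P_xP_y\rVert_\infty\le\min\left\{1,\frac{d_0^{\,k}d_1^{\,s(x,y)}}{M}\right\},
\end{equation*}
with $d_1:=|B^{(1)}|$. Rechecking that lemma in this setting is the main point to verify: the shared unextended systems must enter exactly like shared extended ones. I see no reason they should not, since the lemma never uses the extension structure of the shared blocks.

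The weight-matrix construction (Lemmas~\ref{lem:sm-average} and~\ref{lem:sm-weight}) on $\{0,1\}^m$ is unchanged. For $4C_{\mathrm{app}}\sqrt m\le t\le m/2$ it yields
\begin{equation*}
  F_n^{(c)}\le 2e^{-at^2/m}+\Big(\sum_{j\le t}\tbinom mj\Big)^{1/4}\min\left\{1,\frac{d_0^{\,k}d_1^{\,t}}{M}\right\}^{1/2}.
\end{equation*}

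\emph{Step 3: parameters and edge cases.} Take $t=\lfloor\tau n\rfloor$ with $\tau$ small. The second term is then at most
\begin{equation*}
  2^{-\frac n2[\Delta-\delta\log d_0-\tau\log d_1-\frac12 h_2(\tau/(1-q-\delta))]}.
\end{equation*}
Here I used $\log M\ge n(q\log d_0+\Delta)$, $k\le(q+\delta)n$, and $m\ge(w-\delta)n$. Choosing $\delta,\tau$ small relative to $\Delta$ makes this exponent positive. The first term is $e^{-\Omega(\tau^2 n)}$, since $t^2/m\ge\tau^2 n/2$ once $\tau n\ge 2$, using $m\le n$.

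If $w=0$, or more generally if $m$ is too small for the range \eqref{eq:t-range}, I skip the extension and use the direct dimension bound $F_n^{(c)}\le d_0^{\,k}d_1^{\,m}/M$. This bound follows from the same overlap computation with $x=y$, or from the standard fact that the Bell overlap is at most output dimension over $M$. It suffices when $w\le\delta'$ for a small $\delta'$, because then $d_0^{\,k}d_1^{\,m}\le 2^{n(q\log d_0+\delta(\log d_0+\log d_1)+\delta'\log d_1)}$.

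Combining the typical and atypical contributions gives $F_n\le 2^{-\gamma n}$ for large $n$. Since $\Delta>0$ was arbitrary, $Q^{\exp(\dagger)}(\mathcal F)\le q\log d_0$.
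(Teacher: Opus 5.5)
Your proposal is correct and is essentially the paper's proof. It conditions on the flag string, uses Hoeffding to keep the number $k$ of $\mathcal K_0$ positions at most $(q+\delta)n$, and extends only the $m$ antidegradable positions so that Lemma~\ref{lem:sm-overlap} applies with shared dimension $d_0^{\,k}d_1^{\,s(x,y)}$; the paper asserts exactly this, for the reason you give. It then finishes with Lemmas~\ref{lem:sm-average} and~\ref{lem:sm-weight} at $t=\lfloor\tau n\rfloor$.

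The differences are minor. The paper uses the dimension bound of Lemma~\ref{lem:sm-dimension} only for $w=0$, since for fixed $w>0$ one may simply take $\delta,\tau$ small relative to $w$. So your small-$w$ branch is unnecessary, and as written it would also need the lower tail $k\ge(q-\delta)n$. Note also that Lemma~\ref{lem:sm-dimension}, not an overlap computation at $x=y$ (which does not apply), is the right justification of $F_n^{(c)}\le d_0^{\,k}d_1^{\,m}/M$.
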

If $B^{(0)}$ and $B^{(1)}$ are different, we regard both as subspaces of
$B^{(0)}\oplus B^{(1)}$ and omit the corresponding embeddings.

This proposition allows to prove the theorem.
\begin{proof}[Proof of Theorem~\ref{thm:universal}]
Fix an optimal decomposition $\mathcal N=q\mathcal S+w\mathcal A$ with
$\mathcal A$ antidegradable, and apply
Proposition~\ref{prop:flagged} twice.

We first take
$\mathcal K_0=\mathcal S$ and $\mathcal K_1=\mathcal A$ and write $\mathcal{N} = \mathcal{E} \circ \mathcal{F}$ with $\mathcal{F}$ from Proposition~\ref{prop:flagged} and $\mathcal{E}$ simply erasing the flag. Thus, Propositions~\ref{prop:postprocessing} and \ref{prop:flagged} show $Q^{\exp(\dagger)}(\mathcal N) \leq Q^{\exp(\dagger)}(\mathcal F) \leq q\log |B|$.

Now take instead $\mathcal K_0'=\operatorname{id}_{A'}$ and
$\mathcal K_1'=\mathcal A$ and define the post-processing $\mathcal{E}'(\sigma \otimes \ketbra{c}{c})=\delta_{c,0}S(\sigma) + \delta_{c,1}\sigma$, i.e.,  it reads the flag and applies $\mathcal S$ if the flag is zero and the identity otherwise, while discarding the flag. Therefore,  $Q^{\exp(\dagger)}(\mathcal N) \leq Q^{\exp(\dagger)}(\mathcal F') \leq q\log |A'|$. Taking the smaller of the two gives
Eq.~\eqref{eq:universal-bound}.
\end{proof}

\begin{remark}
The second choice replaces the residual channel by the identity, which can
only help the code and is therefore legitimate in a converse. What is gained
is that the systems shared between two decoder placements at the residual
positions are now channel \emph{inputs} rather than outputs, which is
exactly what turns $\log|B|$ into $\log|A'|$. Proceeding this way also
avoids having to exhibit a degradable parent channel for $\mathcal N$, which
would require knowing that $\mathcal A$ is a degradation of a
self-complementary channel.
\end{remark}

\subsection{Semidefinite formulation}

Because antidegradability is equivalent to the existence of a symmetric
two-output extension, Eq.~\eqref{eq:antidegradable-weight} is a semidefinite
program. Let $d_A:=|A'|$, let $A\simeq A'$, and let
$\Phi_{AA'}$ denote the normalized maximally entangled state of
Eq.~\eqref{eq:mes}, with $M=d_A$, on $A\otimes A'$. Define the
unnormalized Choi operator
\begin{equation}
  J_{\mathcal N}
  :=d_A\left(\operatorname{id}_A\otimes\mathcal N\right)(\Phi_{AA'}),
\end{equation}
so that
\[
\trace_B J_{\mathcal N}=I_A.
\]
With $B'\simeq B$ and $F_{BB'}$ the swap,
$w_{\mathrm{AD}}(\mathcal N)$ is the optimum over
$w\in\mathbb R$, $Y_{AB}=Y_{AB}^{\dagger}$, and
$X_{ABB'}=X_{ABB'}^{\dagger}$ of
\begin{align}
  \text{maximize}\quad & w \nonumber\\
  \text{subject to}\quad & 0\le w\le1,\quad 0\le Y_{AB}\le J_{\mathcal N},\nonumber\\
  & \trace_B Y_{AB}=w I_A,\nonumber\\
  & X_{ABB'}\ge0,\quad \trace_{B'}X_{ABB'}=Y_{AB},\nonumber\\
  & F_{BB'}X_{ABB'}F_{BB'}=X_{ABB'} .
  \label{eq:weight-sdp}
\end{align}
Feasibility in both directions is direct: from a decomposition one takes
$Y_{AB}=wJ_{\mathcal A}$ and $X_{ABB'}$ is equal to $w$ times the Choi operator of a
symmetric extension; conversely, for $w>0$ the operator $Y_{AB}/w$ is the
Choi operator of a channel, $X_{ABB'}/w$ exhibits its symmetric extension,
and for $w<1$ the operator $(J_{\mathcal N}-Y_{AB})/(1-w)$ is the Choi
operator of the residual channel. The bound $w\le1$ also follows from
$Y_{AB}\le J_{\mathcal N}$ and the two trace constraints, but it is displayed for
clarity. 

Every channel with $J_{\mathcal N}>0$  has $w_{\mathrm{AD}}(\mathcal N)>0$.
Indeed, the replacer channel
$\mathcal A_\tau(\rho)=\trace(\rho)\tau_B$ with full-rank $\tau_B$ is
antidegradable and has Choi operator $I_A\otimes\tau_B$. An explicit
feasible choice is any
\begin{equation}
  0<w\le\min\left\{1,\lambda_{\min}\!\left[
  (I_A\otimes\tau_B^{-1/2})J_{\mathcal N}
  (I_A\otimes\tau_B^{-1/2})\right]\right\},
\end{equation}
for which $J_{\mathcal N}-w(I_A\otimes\tau_B)\ge0$. Hence
$U_{\mathrm{AD}}(\mathcal N)<\log\min\{|A'|,|B|\}$ strictly, for every
full-Choi-rank channel. We stress that $U_{\mathrm{AD}}$ is an upper bound
on the exponential strong-converse capacity and need not equal
$Q(\mathcal N)$.

% =====================================================================
\section{Pauli channels}
\label{sec:qubit}
% =====================================================================

For a qubit Pauli channel
$\Lambda_{\bm p}(\rho)=\sum_{i=0}^{3}p_i\sigma_i\rho\sigma_i$ with
$p_0\ge p_i>0$ for $i=1,2,3$, set
\begin{equation}
  \beta(\bm p):=\sum_{1\le i<j\le3}\left(\sqrt{p_i}+\sqrt{p_j}\right)^2 .
  \label{eq:pauli-beta}
\end{equation}
Ref.~\cite{ZhuZhuWang2024} proves the following: either
$\Lambda_{\bm p}$ is antidegradable, or it is non-antidegradable and
$w_{\mathrm{AD}}(\Lambda_{\bm p})=\beta(\bm p)$. Combining that result with
Theorems~\ref{thm:antidegradable} and~\ref{thm:universal} gives an upper bound on $Q^{\exp(\dagger)}$.
\begin{equation}
 Q^{\exp(\dagger)}(\Lambda_{\bm p})\leq U_{\mathrm P}(\bm p)=
  \begin{cases}
    0, & \Lambda_{\bm p}\ \text{antidegradable},\\
    1-\beta(\bm p), & \Lambda_{\bm p}\ \text{not antidegradable}.
  \end{cases}
  \label{eq:pauli-rate}
\end{equation}
The cited analytic formula assumes strictly positive probabilities; boundary
Pauli channels remain covered by the SDP in Eq.~\eqref{eq:weight-sdp}, and
we make no additional closed-form claim for them.
For the depolarizing channel
$\mathcal P_p(\rho)=(1-p)\rho+\frac p3\sum_{i=1}^{3}\sigma_i\rho\sigma_i$
with $0\le p\le3/4$ this reduces to 
\begin{equation}
  U_{\mathrm P}(p)=\max\{1-4p,0\},
  \label{eq:depolarizing-rate}
\end{equation}
where $p=0$ is the identity channel and the depolarizing channel is
antidegradable precisely for $p\ge1/4$.

\begin{figure}[t]
    \centering
    \includegraphics[width=\columnwidth]{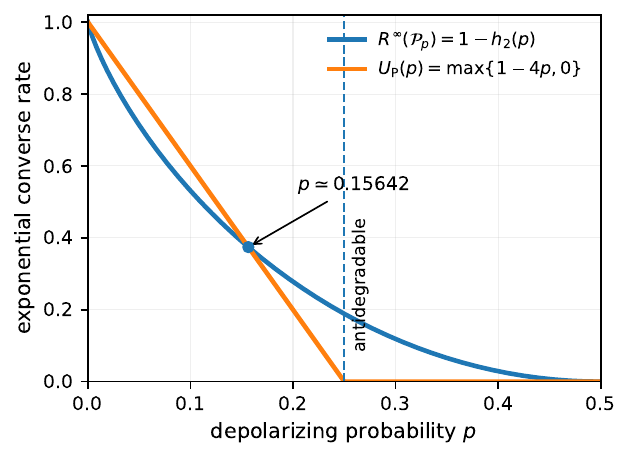}
    \caption{\textbf{Exponential strong-converse bounds for the qubit
    depolarizing channel.} For
    $\mathcal P_{p}(\rho)=(1-p)\rho+\frac p3\sum_{i=1}^{3}\sigma_i\rho\sigma_i$,
    our Pauli-channel bound is $U_{\mathrm P}(p)=\max\{1-4p,0\}$, while
    the regularized channel Rains information is the piecewise function in
    Eq.~\eqref{eq:depolarizing-rains}.
    The two cross at $p\simeq0.15642$, above which our bound is strictly
    tighter. Our bound vanishes at the exact antidegradability threshold
    $p=1/4$, whereas the Rains bound stays strictly positive there.}
    \label{fig:depolarizing-rains-comparison}
\end{figure}

For comparison, define the regularized channel Rains information by
$R^\infty(\mathcal N):=\inf_{k\geq 1}
k^{-1}R(\mathcal N^{\otimes k})$,
where
$R(\mathcal N):=\max_{\psi_{AA'}}R(A{:}B)_{(\operatorname{id}_A\otimes
\mathcal N)(\psi)}$, with
\begin{equation}
  R(A{:}B)_\rho:=\inf_{\substack{\tau_{AB}\ge0\\
  \lVert\tau_{AB}^{T_B}\rVert_1\le1}}D(\rho_{AB}\Vert\tau_{AB})
  \label{eq:rains-state}
\end{equation}
the Rains relative entropy \cite{Rains1999}. Let
$\tilde{J}_{\mathcal P_p}:=(\operatorname{id}\otimes\mathcal P_p)(\Phi_2)$ be the
normalized Choi state. Since $\mathcal P_p$ is a Pauli channel, standard
teleportation gives an LOCC simulation of the channel from
$\tilde{J}_{\mathcal P_p}$~\cite{CopeEtAl2017}. For any input $\psi_{AA'}$, the
state $(\operatorname{id}_A\otimes\mathcal P_p)(\psi_{AA'})$ is therefore obtained
by LOCC from the locally prepared input $\psi_{AA'}$ and the resource state
$\tilde{J}_{\mathcal P_p}$. Monotonicity of the Rains relative entropy under LOCC
gives $R(\mathcal P_p)\le R(A:B)_{\tilde{J}_{\mathcal{P}_p}}$, while the maximally
entangled input gives the reverse inequality. Thus
$R(\mathcal P_p)=R(A:B)_{\tilde{J}_{\mathcal{P}_p}}$. The same simulation argument applies
to every tensor power, and the Rains expression for this isotropic-state
family is additive~\cite{Rains1999}. Consequently
\begin{equation}
  R^\infty(\mathcal P_p)=
  \begin{cases}
    1-h_2(p), & 0\le p\le\tfrac12,\\
    0, & \tfrac12\le p\le\tfrac34,
  \end{cases}
  \label{eq:depolarizing-rains}
\end{equation}
The role of the regularized channel Rains information as an upper bound on exponential
strong-converse rate follows from
Ref.~\cite{TomamichelWildeWinter2017}. As
Fig.~\ref{fig:depolarizing-rains-comparison} shows, our bound crosses
$R^\infty$ at $p\simeq0.15642$ and is strictly tighter above; more
tellingly, it vanishes exactly at the antidegradability threshold $p=1/4$,
where the Rains bound remains strictly positive. Since
the depolarizing capacity is unknown for $p<1/4$, this is an upper bound on
the exponential strong-converse capacity, not a determination of the
capacity.

% =====================================================================
\section{A nondegradable family at its exact capacity}
\label{sec:amplitude-damping}
% =====================================================================

Proposition~\ref{prop:postprocessing} can be tight. Let $d\ge3$, fix an
orthonormal basis $\{\ket0,\dots,\ket{d-1}\}$, and for $\eta\in[0,1]$ define
the multilevel amplitude-damping channel
$\Omega^{[\eta]}$ by Kraus operators
\begin{align}
  K_0^{[\eta]}&:=\ketbra00+\sqrt{1-\eta}\,\ketbra11
  +\sum_{k=2}^{d-1}\ketbra kk,
  \nonumber\\
  K_1^{[\eta]}&:=\sqrt\eta\,\ketbra01 ,
  \label{eq:ad-kraus}
\end{align}
which damps $\ket1\to\ket0$ with probability $\eta$ and acts as the identity
on $\mathcal H_{\bar1}:=\mathrm{span}\{\ket0,\ket2,\dots,\ket{d-1}\}$; this family
was studied in Ref.~\cite{ChessaGiovannetti2021}, while recent work analyzes
more general finite-dimensional lossy and multilevel amplitude-damping
channels~\cite{CocciarettoGiovannetti2026}. Because the population
remaining in $\ket1$ multiplies,
\begin{equation}
  \Omega^{[\eta_1]}\circ\Omega^{[\eta_2]}
  =\Omega^{[\eta_1+\eta_2-\eta_1\eta_2]},
  \label{eq:ad-composition}
\end{equation}
so that for every $\eta\in[1/2,1]$,
\begin{equation}
  \Omega^{[\eta]}=\Omega^{[2\eta-1]}\circ\Omega^{[1/2]}.
  \label{eq:ad-factorization}
\end{equation}

\begin{theorem}\label{thm:amplitude-damping}
For every $d\ge3$ and every $\eta\in[1/2,1]$,
\begin{equation}
  Q^{\exp(\dagger)}\!\left(\Omega^{[\eta]}\right)
  =Q\!\left(\Omega^{[\eta]}\right)=\log(d-1),
  \label{eq:ad-capacity}
\end{equation}
and for $\eta>1/2$ the channel is neither degradable nor antidegradable.
\end{theorem}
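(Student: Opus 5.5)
The proof has three parts: an upper bound on $Q^{\exp(\dagger)}$ via post-processing, a matching lower bound on $Q$, and a structural argument ruling out (anti)degradability.

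\emph{Upper bound.} I would use the factorization \eqref{eq:ad-factorization}, $\Omega^{[\eta]}=\Omega^{[2\eta-1]}\circ\Omega^{[1/2]}$, together with Proposition~\ref{prop:postprocessing}. This gives
\[
Q^{\exp(\dagger)}(\Omega^{[\eta]})\le Q^{\exp(\dagger)}(\Omega^{[1/2]}).
\]
It then suffices to show two things: that $\Omega^{[1/2]}$ is degradable, and that $Q^{(1)}(\Omega^{[1/2]})=\log(d-1)$. Theorem~\ref{thm:degradable} then closes the upper bound.

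\emph{Degradability of $\Omega^{[1/2]}$.} From the Kraus operators, the complementary channel has a two-dimensional environment spanned by $\ket{e_0},\ket{e_1}$. Its Gram structure is
\[
\Omega^{c}(\rho)=\sum_{i,j}\trace[K_i\rho K_j^\dagger]\,\ketbra{e_i}{e_j}.
\]
At $\eta=1/2$ one has $\trace[K_1\rho K_1^\dagger]=\rho_{11}/2$. For the degrading map I would take the qubit amplitude-damping degradation on $\mathrm{span}\{\ket0,\ket1\}$ (the map $\Omega^{[0]}$ restricted, i.e.\ the standard $\eta=1/2$ symmetric case). The other levels $\ket k$, $k\ge2$, would be mapped to $\ket{e_0}$. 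The check is that the coherences $\trace[K_0\rho K_1^\dagger]=\sqrt{\eta}\,(\rho_{10}\text{-type terms})$ involve only the $\{\ket0,\ket1\}$ block, where $K_1$ maps $\ket1\to\ket0$. Concretely, I would show $\Omega^c=\mathcal D\circ\Omega^{[1/2]}$ with $\mathcal D$ having Kraus operators $\ketbra{e_0}{0}+\ketbra{e_1}{1}$ plus $\ketbra{e_0}{k}$ for $k\ge2$. The coherence terms need care, possibly requiring a Stinespring isometry in which the environment sees $\ket1$ with amplitude $\sqrt{1/2}$. I expect this verification to be the main technical point. If it fails as stated, I would instead cite the degradability of $\Omega^{[\eta]}$ for $\eta\le1/2$ from Ref.~\cite{ChessaGiovannetti2021}.

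\emph{Capacity value.} For the lower bound, $\Omega^{[\eta]}$ acts as the identity on the $(d-1)$-dimensional subspace $\mathcal H_{\bar1}$. Encoding into it gives $Q\ge\log(d-1)$ for all $\eta$.

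For the upper bound I need $Q^{(1)}(\Omega^{[1/2]})\le\log(d-1)$. Since $Q(\Omega^{[\eta]})\le Q(\Omega^{[1/2]})=Q^{(1)}(\Omega^{[1/2]})$, this is what closes the sandwich. I would bound the coherent information by $H(B)-H(E)$ and split according to the population $p=\rho_{11}$. Let $\bar\rho$ be the state with $\ket1$ removed. Output $B$ lives in $d$ dimensions, but the $\ket1$ weight is $p/2$ and $E$ gets entropy at least $h_2(p/2)$-like. Writing $H(B)\le h_2(p/2)+\log(d-1)$ (splitting $B$ into $\ket1$ versus the rest) and showing $H(E)\ge h_2(p/2)$ would be enough. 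However, the environment state may be non-diagonal, so $H(E)\ge$ diagonal entropy fails in that direction. I would instead use concavity and the known $Q^{(1)}$ computation of Ref.~\cite{ChessaGiovannetti2021}, where the value $\log(d-1)$ for $\eta\ge1/2$ is established. The final step is the chain
\[
\log(d-1)\le Q(\Omega^{[\eta]})\le Q^{\exp(\dagger)}(\Omega^{[\eta]})\le Q^{\exp(\dagger)}(\Omega^{[1/2]})=\log(d-1).
\]

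\emph{Non-(anti)degradability for $\eta>1/2$.} For antidegradability: antidegradable channels have $Q=0$, but here $Q=\log(d-1)>0$.

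For degradability: restricted to the input block $\mathrm{span}\{\ket0,\ket1\}$, the channel is a qubit amplitude-damping channel with $\eta>1/2$. That channel is antidegradable but not degradable, and it has an output block where the environment knows strictly more. Suppose a degrading map $\mathcal D$ existed for the full channel. Inputs supported on $\mathrm{span}\{\ket0,\ket1\}$ produce outputs in $\mathrm{span}\{\ket0,\ket1\}$. Composing $\mathcal D$ with the projection onto that output block would then degrade the qubit amplitude-damping channel with $\eta>1/2$. That is impossible, since such a channel would be both degradable and antidegradable, hence have $Q=0$ and equal coherent information along with its complement. This contradicts its known strictly non-degradable structure: its complement is amplitude damping with $1-\eta<1/2$, which has positive capacity.
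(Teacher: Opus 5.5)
Your architecture is the paper's. You use $\Omega^{[\eta]}=\Omega^{[2\eta-1]}\circ\Omega^{[1/2]}$ with Proposition~\ref{prop:postprocessing} and Theorem~\ref{thm:degradable} for the upper bound, and the identity action on $\mathcal H_{\bar1}$ for the lower bound. You restrict to $\mathrm{span}\{\ket0,\ket1\}$ for non-degradability and use positive capacity for non-antidegradability. Two of your hedges are unnecessary. First, the degrading map you wrote down is correct as stated: it is exactly the paper's $\{L_k\}$ at $\gamma=\eta/(1-\eta)=1$, where $L_1=0$. At $\eta=1/2$ the output's $\ket1$-population $\rho_{11}/2$ and $\ket0$--$\ket1$ coherence $\rho_{01}/\sqrt2$ already equal those of $E$, and the $L_k$, $k\ge2$, only move the remaining population to $\ket{e_0}$. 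Second, your argument that qubit amplitude damping with $\eta>1/2$ is not degradable works once stated cleanly. Degradability of a channel is antidegradability of its complement. The complement, amplitude damping with parameter $1-\eta<1/2$, would then have zero capacity, contradicting its positive coherent information. This is a self-contained substitute for the citation the paper uses.

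The one step you do not establish is $Q^{(1)}(\Omega^{[1/2]})\le\log(d-1)$, and it is the only place the value $\log(d-1)$ enters the upper bound. You correctly see that $H(E)\ge h_2(p_1/2)$ fails for non-diagonal inputs. But you then defer to ``concavity and the known computation'' of a reference without saying what concavity is applied to or why it removes the coherences. The missing idea, used in the paper, is phase covariance. For $Z_\phi=\sum_k e^{i\phi_k}\ketbra kk$ one has $K_0Z_\phi=Z_\phi K_0$ and $K_1Z_\phi=e^{i(\phi_1-\phi_0)}Z_\phi K_1$. So the Stinespring isometry intertwines $Z_\phi$ with $Z_\phi\otimes\mathrm{diag}(1,e^{i(\phi_1-\phi_0)})$, and $\Ic$ is invariant under $\rho\mapsto Z_\phi\rho Z_\phi^\dagger$. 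Concavity of $\Ic$ for degradable channels then shows that the phase-averaged, i.e.\ diagonal, input does at least as well. For diagonal inputs the environment state is exactly $\mathrm{diag}(1-p_1/2,p_1/2)$, and your pinching bound on $H(B)$ finishes the argument; equivalently, grouping gives $\Ic=(1-p_1/2)H(\widetilde{\bm p})$. An even shorter closure uses your own degrading map, which has only $d-1$ nonzero Kraus operators. Its Stinespring dilation $B\to EF$ therefore has $|F|=d-1$. For every input, $\Ic=H(EF)-H(E)=H(F|E)\le\log(d-1)$, with no reduction to diagonal states needed.
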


\begin{proof}
We first prove the statement for $\eta=1/2$ in arbitrary dimension. Use
the Stinespring isometry
$V_\eta=K_0^{[\eta]}\otimes\ket0_E+K_1^{[\eta]}\otimes\ket1_E$.
For $0\le\eta\le1/2$, put $\gamma=\eta/(1-\eta)$ and define a channel
$\mathcal D_\eta:B\to E$ by the Kraus operators
\begin{align}
  L_0&:=\ket0_E\!\bra0_B+\sqrt\gamma\,\ket1_E\!\bra1_B,\nonumber\\
  L_1&:=\sqrt{1-\gamma}\,\ket0_E\!\bra1_B,\nonumber\\
  L_k&:=\ket0_E\!\bra k_B,\qquad k=2,\ldots,d-1.
  \label{eq:ad-degrading-kraus}
\end{align}
They satisfy $\sum_kL_k^\dagger L_k=I_B$, and direct
substitution gives
\[
\Omega^{[\eta]c}=\mathcal D_\eta\circ\Omega^{[\eta]}.
\]
Thus $\Omega^{[\eta]}$ is degradable for every
$0\leq\eta\leq1/2$; in particular, $\Omega^{[1/2]}$ is
degradable.

It remains to maximize its coherent information. The channel is covariant
under diagonal phase unitaries. Since coherent information is concave in the
input for a degradable channel, averaging over those phases cannot decrease
it, so an optimizer may be taken diagonal,
$\rho=\sum_{k=0}^{d-1}p_k\ketbra kk$. Set $q=p_1/2$. The receiver and
environment eigenvalue lists are
\begin{equation}
  \bigl(p_0+q,q,p_2,\ldots,p_{d-1}\bigr),
  \qquad (1-q,q),
\end{equation}
respectively. If
$\widetilde{\bm p}:=(p_0+q,p_2,\ldots,p_{d-1})/(1-q)$, then
\begin{align}
  \Ic(\rho,\Omega^{[1/2]})
  &=H(B)-H(E)\nonumber\\
  &=(1-q)H(\widetilde{\bm p})\nonumber\\
  &\le(1-q)\log(d-1)\le\log(d-1).
  \label{eq:ad-coherent-bound}
\end{align}
Equality is attained for $p_1=0$ and the maximally mixed state on
$\mathcal H_{\bar1}$. Hence,  and since $\Omega^{[1/2]}$ is degradable,
$Q(\Omega^{[1/2]})=Q^{(1)}(\Omega^{[1/2]})=\log(d-1)$.
Equation~\eqref{eq:ad-factorization},
Proposition~\ref{prop:postprocessing}, and Theorem~\ref{thm:degradable} now
give
$Q^{\exp(\dagger)}(\Omega^{[\eta]})\le\log(d-1)$. Conversely,
$\Omega^{[\eta]}$ and $\Omega^{[1/2]}$ act identically on $\mathcal H_{\bar1}$, which achieves
$\log(d-1)$, and Eq.~\eqref{eq:capacity-hierarchy} closes the sandwich.

Finally let $\eta>1/2$. If $\Omega^{[\eta]}$ were degradable, restricting
its input and receiver output to $\mathrm{span}\{\ket0,\ket1\}$ would make
the qubit amplitude-damping channel with parameter $\eta$ degradable, which is known to be not degradable for $\eta>1/2$ \cite{Cubitt2008Degradable}. $\Omega^{[\eta]}$ is not antidegradable either, because the positive rate $\log(d-1)$ is achievable.
\end{proof}

Thus, the multilevel amplitude-damping channel $\Omega^{[\eta]}$ for $\eta > 1/2$ constitutes a family of neither degradable nor antidegradable channels that nevertheless obeys an all-code
exponential strong converse at its exact quantum capacity.

% =====================================================================
\section{Discussion}
\label{sec:discussion}
% =====================================================================

Our results show that
the quantum capacity of a degradable channel is a sharp threshold, and
crossing it is exponentially costly. Tolerating a fixed error, however
large, buys no rate at all, and there is no intermediate regime in which a
scheme limps along above capacity at constant fidelity. The same holds
for antidegradable channels, where
the threshold is zero. Together the two results cover the erasure channel
completely and give, to our knowledge, the first all-code strong converse for it.

The mechanism is worth isolating from the application. No-cloning is
usually invoked qualitatively, to argue that some capacity vanishes. What
Sec.~\ref{sec:antidegradable} shows is that it also has quantitative,
finite-block content: a swap-symmetric extension does not merely forbid two
receivers from both decoding, it forbids $2^n$ overlapping receivers from
decoding with any but exponentially small probability. The proof techniques used here can probably be applied wherever a resource admits many
overlapping simulations of the same measurement. Finite-block bounds in
quantum error correction and converses for other communication tasks, are natural candidates for such extensions.

Two limitations are worth naming. First, the constants are not optimized:
Proposition~\ref{prop:finite-block} gives an exponent of order $\delta^2$,
and we make no claim that the resulting $\gamma_r$ is close to the true
strong-converse exponent. Determining the optimal exponent, even for the
erasure channel, is open. Second, for a general channel the bound
$U_{\mathrm{AD}}$ of Theorem~\ref{thm:universal} does not meet the capacity,
and the gap in Fig.~\ref{fig:depolarizing-rains-comparison} below
$p\simeq0.156$ displays this discrepancy. Better degradable extensions, additional structure
in the residual channel $\mathcal S$, and combining the antidegradable
weight with Rains-type relaxations are all plausible routes to tightening
it.

\begin{acknowledgments}
 Some technical steps of the proofs presented here were developed with the help of
ChatGPT (OpenAI, GPT-5.6 Sol; accessed July 2026). The initial draft of the paper was written by hand and iteratively improved with the help of ChatGPT 5.6 Sol and Claude Opus 4.8, both accessed August 2026. All AI-assisted material was
reviewed and revised by the authors, who take full responsibility for the
content. This work has been supported by the Federal
Ministry of Research, Technology and Space (BMFTR Projects QR.N, Grant
No.~16KIS2202 and QSolid, Grant No.~13N16163). We also acknowledge
financial support by Deutsche Forschungsgemeinschaft (DFG, German Research
Foundation) under Germany's Excellence Strategy -- Cluster of Excellence
Matter and Light for Quantum Computing (ML4Q) EXC 2004/1 -- 390534769.
\end{acknowledgments}

\bibliography{channels}

@article{Lloyd1997,
  author  = {Lloyd, Seth},
  title   = {Capacity of the Noisy Quantum Channel},
  journal = {Physical Review A},
  volume  = {55},
  number  = {3},
  pages   = {1613--1622},
  year    = {1997},
  doi     = {10.1103/PhysRevA.55.1613}
}

@article{Devetak2005,
  author  = {Devetak, Igor},
  title   = {The Private Classical Capacity and Quantum Capacity of a Quantum Channel},
  journal = {IEEE Transactions on Information Theory},
  volume  = {51},
  number  = {1},
  pages   = {44--55},
  year    = {2005},
  doi     = {10.1109/TIT.2004.839515},
  eprint  = {quant-ph/0304127},
  archivePrefix = {arXiv}
}

@article{KretschmannWerner2004,
  author  = {Kretschmann, Dennis and Werner, Reinhard F.},
  title   = {Tema Con Variazioni: Quantum Channel Capacity},
  journal = {New Journal of Physics},
  volume  = {6},
  pages   = {26},
  year    = {2004},
  doi     = {10.1088/1367-2630/6/1/026},
  eprint  = {quant-ph/0311037},
  archivePrefix = {arXiv}
}

@article{SharmaWarsi2013,
  author  = {Sharma, Naresh and Warsi, Naqueeb Ahmad},
  title   = {Fundamental Bound on the Reliability of Quantum Information Transmission},
  journal = {Physical Review Letters},
  volume  = {110},
  number  = {8},
  pages   = {080501},
  year    = {2013},
  doi     = {10.1103/PhysRevLett.110.080501},
  eprint  = {1205.1712},
  archivePrefix = {arXiv}
}

@article{KaurEtAl2021,
  author  = {Kaur, Eneet and Das, Siddhartha and Wilde, Mark M. and Winter, Andreas},
  title   = {Resource Theory of Unextendibility and Non-Asymptotic Quantum Capacity},
  journal = {Physical Review A},
  volume  = {104},
  number  = {2},
  pages   = {022401},
  year    = {2021},
  doi     = {10.1103/PhysRevA.104.022401},
  eprint  = {1803.10710},
  archivePrefix = {arXiv}
}

@article{MorganWinter2014,
  author  = {Morgan, Ciara and Winter, Andreas},
  title   = {{``Pretty Strong'' Converse for the Quantum Capacity of Degradable Channels}},
  journal = {IEEE Transactions on Information Theory},
  volume  = {60},
  number  = {1},
  pages   = {317--333},
  year    = {2014},
  doi     = {10.1109/TIT.2013.2288971},
  eprint  = {1301.4927},
  archivePrefix = {arXiv},
  primaryClass = {quant-ph}
}

@inproceedings{WildeWinter2014,
  author    = {Wilde, Mark M. and Winter, Andreas},
  title     = {Strong Converse for the Quantum Capacity of the Erasure Channel for Almost All Codes},
  booktitle = {Proceedings of the 9th Conference on the Theory of Quantum Computation, Communication and Cryptography},
  series    = {Leibniz International Proceedings in Informatics},
  volume    = {27},
  pages     = {52--66},
  year      = {2014},
  doi       = {10.4230/LIPIcs.TQC.2014.52},
  eprint    = {1402.3626},
  archivePrefix = {arXiv},
  primaryClass = {quant-ph}
}

@article{TomamichelWildeWinter2017,
  author  = {Tomamichel, Marco and Wilde, Mark M. and Winter, Andreas},
  title   = {Strong Converse Rates for Quantum Communication},
  journal = {IEEE Transactions on Information Theory},
  volume  = {63},
  number  = {1},
  pages   = {715--727},
  year    = {2017},
  doi     = {10.1109/TIT.2016.2615847},
  eprint  = {1406.2946},
  archivePrefix = {arXiv},
  primaryClass = {quant-ph}
}

@article{KhanianHirche2025,
  author  = {Baghali Khanian, Zahra and Hirche, Christoph},
  title   = {On Strong Converse Bounds for the Private and Quantum Capacities of Antidegradable Channels},
  journal = {arXiv preprint arXiv:2507.15661},
  year    = {2025},
  eprint  = {2507.15661},
  archivePrefix = {arXiv},
  primaryClass = {quant-ph}
}

@article{Tomamichel2026,
  author  = {Tomamichel, Marco},
  title   = {A Strong Converse for Stabilizer Codes over {Pauli} Channels via the Blowing-Up Lemma},
  journal = {arXiv preprint arXiv:2607.23450},
  year    = {2026},
  eprint  = {2607.23450},
  archivePrefix = {arXiv},
  primaryClass = {quant-ph}
}

@article{Stinespring1955,
  author  = {Stinespring, W. Forrest},
  title   = {Positive Functions on {$C^*$}-Algebras},
  journal = {Proceedings of the American Mathematical Society},
  volume  = {6},
  number  = {2},
  pages   = {211--216},
  year    = {1955},
  doi     = {10.2307/2032342}
}

@article{DevetakShor2005,
  author  = {Devetak, Igor and Shor, Peter W.},
  title   = {The Capacity of a Quantum Channel for Simultaneous Transmission
             of Classical and Quantum Information},
  journal = {Communications in Mathematical Physics},
  volume  = {256},
  number  = {2},
  pages   = {287--303},
  year    = {2005},
  doi     = {10.1007/s00220-005-1317-6},
  eprint  = {quant-ph/0311131},
  archivePrefix = {arXiv}
}

@article{BennettDiVincenzoSmolin1997,
  author  = {Bennett, Charles H. and DiVincenzo, David P. and Smolin, John A.},
  title   = {Capacities of Quantum Erasure Channels},
  journal = {Physical Review Letters},
  volume  = {78},
  number  = {16},
  pages   = {3217--3220},
  year    = {1997},
  doi     = {10.1103/PhysRevLett.78.3217},
  eprint  = {quant-ph/9701015},
  archivePrefix = {arXiv}
}

@article{ZhuZhuWang2024,
  author       = {Zhu, Chengkai and Zhu, Chenghong and Wang, Xin},
  title        = {Estimate Distillable Entanglement and Quantum Capacity by
                  Squeezing Useless Entanglement},
  journal      = {IEEE Journal on Selected Areas in Communications},
  volume       = {42},
  number       = {7},
  pages        = {1850--1860},
  year         = {2024},
  month        = jul,
  doi          = {10.1109/JSAC.2024.3380081},
  eprint       = {2303.07228},
  archiveprefix = {arXiv},
  primaryclass = {quant-ph}
}

@article{ChessaGiovannetti2021,
  author       = {Chessa, Stefano and Giovannetti, Vittorio},
  title        = {Quantum Capacity Analysis of Multi-Level Amplitude Damping
                  Channels},
  journal      = {Communications Physics},
  volume       = {4},
  pages        = {22},
  year         = {2021},
  doi          = {10.1038/s42005-021-00524-4},
  eprint       = {2008.00477},
  archiveprefix = {arXiv},
  primaryclass = {quant-ph}
}

@article{Bennett1993Teleportation,
  author  = {Bennett, Charles H. and Brassard, Gilles and Cr{\'e}peau, Claude
             and Jozsa, Richard and Peres, Asher and Wootters, William K.},
  title   = {Teleporting an Unknown Quantum State via Dual Classical and
             {E}instein-{P}odolsky-{R}osen Channels},
  journal = {Physical Review Letters},
  volume  = {70},
  number  = {13},
  pages   = {1895--1899},
  year    = {1993},
  doi     = {10.1103/PhysRevLett.70.1895}
}

@article{Cirac1999Distributed,
  author       = {Cirac, J. Ignacio and Ekert, Artur K. and Huelga, Susana F.
                  and Macchiavello, Chiara},
  title        = {Distributed Quantum Computation over Noisy Channels},
  journal      = {Physical Review A},
  volume       = {59},
  number       = {6},
  pages        = {4249--4254},
  year         = {1999},
  doi          = {10.1103/PhysRevA.59.4249},
  eprint       = {quant-ph/9803017},
  archiveprefix = {arXiv}
}

@article{Wehner2018QuantumInternet,
  author  = {Wehner, Stephanie and Elkouss, David and Hanson, Ronald},
  title   = {Quantum Internet: A Vision for the Road Ahead},
  journal = {Science},
  volume  = {362},
  number  = {6412},
  pages   = {eaam9288},
  year    = {2018},
  doi     = {10.1126/science.aam9288}
}

@article{Shannon1948,
  author  = {Shannon, Claude E.},
  title   = {A Mathematical Theory of Communication},
  journal = {The Bell System Technical Journal},
  volume  = {27},
  pages   = {379--423, 623--656},
  year    = {1948},
  doi= {doi:10.1002/j.1538-7305.1948.tb01338.x}
}

@article{WolfPerezGarcia2007,
  author       = {Wolf, Michael M. and P{\'e}rez-Garc{\'i}a, David},
  title        = {Quantum Capacities of Channels with Small Environment},
  journal      = {Physical Review A},
  volume       = {75},
  number       = {1},
  pages        = {012303},
  year         = {2007},
  doi          = {10.1103/PhysRevA.75.012303},
  eprint       = {quant-ph/0607070},
  archiveprefix = {arXiv}
}

@article{MyhrLutkenhaus2009,
  author        = {Myhr, Geir Ove and L{\"u}tkenhaus, Norbert},
  title         = {Spectrum Conditions for Symmetric Extendible States},
  journal       = {Physical Review A},
  volume        = {79},
  number        = {6},
  pages         = {062307},
  year          = {2009},
  doi           = {10.1103/PhysRevA.79.062307},
  eprint        = {0812.3667},
  archiveprefix = {arXiv},
  primaryclass  = {quant-ph}
}

@article{deWolfNote,
  author        = {de Wolf, Ronald},
  title         = {A note on quantum algorithms and the minimal degree of $\epsilon$-error polynomials for symmetric functions},
  journal       = {Quantum Information \&  Computation},
  volume        = {8},
  number        = {10},
  pages         = {943},
  year          = {2008},
  doi           = {10.26421/QIC8.10-4}
}

@article{BarnumKnillNielsen2000,
  author        = {Barnum, Howard and Knill, Emanuel and Nielsen, Michael A.},
  title         = {On Quantum Fidelities and Channel Capacities},
  journal       = {IEEE Transactions on Information Theory},
  volume        = {46},
  number        = {4},
  pages         = {1317--1329},
  year          = {2000},
  doi           = {10.1109/18.850671},
  eprint        = {quant-ph/9809010},
  archivePrefix = {arXiv}
}

@article{CocciarettoGiovannetti2026,
  author        = {Cocciaretto, Sofia and Giovannetti, Vittorio},
  title         = {Quantum Capacity Analysis of Finite-Dimensional Lossy Channels},
  journal       = {arXiv preprint arXiv:2601.18960},
  year          = {2026},
  eprint        = {2601.18960},
  archivePrefix = {arXiv},
  primaryClass  = {quant-ph}
}

@article{CopeEtAl2017,
  author        = {Cope, Thomas P. W. and Hetzel, Leon and Banchi, Leonardo and Pirandola, Stefano},
  title         = {Simulation of Non-{Pauli} Channels},
  journal       = {Physical Review A},
  volume        = {96},
  number        = {2},
  pages         = {022323},
  year          = {2017},
  doi           = {10.1103/PhysRevA.96.022323},
  eprint        = {1706.05384},
  archiveprefix = {arXiv},
  primaryclass  = {quant-ph}
}

@article{Cubitt2008Degradable,
    author = {Cubitt, Toby S. and Ruskai, Mary Beth and Smith, Graeme},
    title = {The structure of degradable quantum channels},
    journal = {Journal of Mathematical Physics},
    volume = {49},
    number = {10},
    pages = {102104},
    year = {2008},
    month = {10},
    doi = {10.1063/1.2953685}
}

@article{Rains1999,
  author        = {Rains, E. M.},
  title         = {Bound on Distillable Entanglement},
  journal       = {Physical Review A},
  volume        = {60},
  number        = {1},
  pages         = {179--184},
  year          = {1999},
  doi           = {10.1103/PhysRevA.60.179},
  eprint        = {quant-ph/9809082},
  archivePrefix = {arXiv},
  primaryClass  = {quant-ph}
}

\clearpage
\onecolumngrid
\appendix

\setcounter{theorem}{0}
\renewcommand{\thetheorem}{S\arabic{theorem}}
\renewcommand{\theHtheorem}{S\arabic{theorem}}

\section*{Supplemental Material}

This appendix proves, in order: the finite-block bound for antidegradable
channels (Proposition~\ref{prop:finite-block} of the main text), via three
lemmas that isolate the operator-theoretic, the Boolean-analytic, and the
no-cloning content of the argument; the rate bound for antidegradable channels
at exponentially small fidelity (Lemma~\ref{lem:symmetric-rate}) and the flagged
bound (Proposition~\ref{prop:flagged}).

\setcounter{section}{0}
\renewcommand{\thesection}{S\arabic{section}}
\renewcommand{\theHsection}{supp.S\arabic{section}}
\makeatletter
\@removefromreset{equation}{section}
\makeatother
\renewcommand{\theequation}{S\arabic{equation}}
\renewcommand{\theHequation}{supp.S\arabic{equation}}
\setcounter{equation}{0}

% =====================================================================
\section{Setup and the common Hilbert space}
\label{sm:setup}
% =====================================================================

Throughout this section $\mathcal N:A'\to B$ is antidegradable,
$d:=|B|$, and $\widetilde{\mathcal N}:A'\to B_0B_1$ is a swap-symmetric
two-output extension, so that
$\trace_{B_1}\circ\widetilde{\mathcal N}=\trace_{B_0}\circ\widetilde{\mathcal N}=\mathcal N$
with $B_0\simeq B_1\simeq B$. We fix a generalized $(n,M)$ Bell-target
scheme $(\rho_{RA'^{\,n}},\mathcal D_n)$, with $|R|=|\widehat S|=M$ and
$\mathcal D_n:B^n\to\widehat S$.

Applying $\widetilde{\mathcal N}$ to each channel use produces
\begin{equation}
  \omega_{R\mathbf B}
  :=\left(\operatorname{id}_R\otimes\widetilde{\mathcal N}^{\otimes n}\right)
  (\rho_{RA'^{\,n}}),
  \qquad
  \mathbf B:=\bigotimes_{j=1}^{n}\left(B_{j,0}\otimes B_{j,1}\right).
  \label{eq:sm-output-state}
\end{equation}
For $x\in\{0,1\}^n$ put $B_x:=\bigotimes_{j=1}^{n}B_{j,x_j}$ and
\begin{equation}
  s(x,y):=\left|\{j:x_j=y_j\}\right| .
  \label{eq:sm-agreement}
\end{equation}
Because each marginal of $\widetilde{\mathcal N}$ equals $\mathcal N$,
\begin{equation}
  \trace_{\mathbf B\setminus B_x}\left[\omega_{R\mathbf B}\right]
  =\left(\operatorname{id}_R\otimes\mathcal N^{\otimes n}\right)
  (\rho_{RA'^{\,n}})
  \qquad\text{for every }x\in\{0,1\}^n .
  \label{eq:sm-marginal-invariance}
\end{equation}

To each $x$ we attach a private ancilla $G_x$, prepared in $\ket0$, and a
unitary dilation; the ancilla and environment are enlarged as necessary so
that the input and output dimensions match,
\begin{equation}
  U_x:B_xG_x\longrightarrow \widehat SE_x
\end{equation}
of a copy $\mathcal D_x$ of the decoder, so that
$\mathcal D_x(\tau)=\trace_{E_x}[U_x(\tau\otimes\ketbra00_{G_x})U_x^\dagger]$.
All objects are placed on the common space
\begin{equation}
  \mathcal H_{\mathrm{tot}}
  :=\mathcal H_R\otimes
  \bigotimes_{j=1}^{n}\left(\mathcal H_{B_{j,0}}\otimes\mathcal H_{B_{j,1}}\right)
  \otimes\bigotimes_{y\in\{0,1\}^{n}}\mathcal H_{G_y},
  \qquad
  \widetilde\omega:=\omega_{R\mathbf B}\otimes
  \bigotimes_{y\in\{0,1\}^n}\ketbra00_{G_y},
  \label{eq:sm-common-space}
\end{equation}
and the fidelity projector of the $x$-th placement is
\begin{equation}
  P_x:=\left(I_R\otimes U_x^\dagger\right)
  (\Phi_{R\widehat S}\otimes I_{E_x})
  \left(I_R\otimes U_x\right)
  \otimes I_{B_{\overline x}}\otimes I_{G_{\neq x}},
  \label{eq:sm-projector}
\end{equation}
where
$B_{\overline x}:=\bigotimes_{j=1}^{n}B_{j,1-x_j}$ is the complementary
selection and $G_{\neq x}:=\bigotimes_{y\neq x}G_y$. Canonical tensor-factor
permutations needed to place $U_x$ on $B_xG_x$ are understood throughout.
Since
$\Phi_{R\widehat S}\otimes I_{E_x}$ is a projection and unitary conjugation
preserves projections, each $P_x$ is a projection on
$\mathcal H_{\mathrm{tot}}$.

By Eq.~\eqref{eq:sm-marginal-invariance} the reduced state seen by every
placement is the same, so
\begin{equation}
  F_n=\trace\left[P_x\widetilde\omega\right]
  \qquad\text{for every }x\in\{0,1\}^n ,
  \label{eq:sm-common-fidelity}
\end{equation}
and therefore, with
\begin{equation}
  A:=2^{-n}\sum_{x\in\{0,1\}^n}P_x ,
  \label{eq:sm-average}
\end{equation}
we get $F_n=\trace[A\widetilde\omega]\le\lVert A\rVert_\infty$, using
$A\ge0$. Bounding $\lVert A\rVert_\infty$ is the content of
Secs.~\ref{sm:average} to \ref{sm:assembly}.

% =====================================================================
\section{From pairwise overlaps to the average projector}
\label{sm:average}
% =====================================================================

Let
$\mathbf 1\in\mathbb R^{2^n}$ be the all-ones vector and
$U:=2^{-n}\mathbf 1\mathbf 1^{\mathsf T}$, so that $U_{xy}=2^{-n}$ for all
$x,y$ and $U$ is the orthogonal projection onto the space spanned by $\mathbf 1$.

\begin{lemma}\label{lem:sm-average}
Let $\{P_x\}_{x\in\{0,1\}^n}$ be projections on a Hilbert space and
$A:=2^{-n}\sum_xP_x$. Suppose there is a real symmetric matrix
$W\in\mathbb R^{2^n\times2^n}$, indexed by $\{0,1\}^n$, with
\begin{equation}
  W\mathbf 1=\mathbf 1,
  \qquad
  \lVert W-U\rVert_\infty\le\zeta,
  \qquad
  \max_x\sum_y|W_{xy}|\le L,
  \label{eq:sm-W-hypotheses}
\end{equation}
and such that $W_{xy}\ne0$ implies
$\lVert P_xP_y\rVert_\infty\le\eta$. Then
\begin{equation}
  \lVert A\rVert_\infty\le\zeta+\sqrt{L\eta}.
  \label{eq:sm-average-bound}
\end{equation}
\end{lemma}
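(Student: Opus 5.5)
The plan is to take a unit eigenvector $\psi$ of $A$ for its largest eigenvalue $\lambda=\lVert A\rVert_\infty$ (this exists since $A\ge0$; in infinite dimension one would use approximate eigenvectors instead). Put $v_x:=P_x\psi$ and $a_x:=\lVert v_x\rVert$. The two basic identities are
\begin{equation*}
  \sum_x a_x^2=\sum_x\langle\psi,P_x\psi\rangle=2^n\lambda,
  \qquad
  \lambda^2=\lVert A\psi\rVert^2=4^{-n}\sum_{x,y}\langle v_x,v_y\rangle
  =2^{-n}\sum_{x,y}U_{xy}\langle v_x,v_y\rangle .
\end{equation*}
I then split $U=W-(W-U)$ inside the last sum and bound the two resulting pieces separately.

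For the $W-U$ piece I use a general fact. If $M$ is real symmetric, then $\sum_{x,y}M_{xy}\langle v_x,v_y\rangle=\langle \mathbf v,(M\otimes I)\mathbf v\rangle$, where $\mathbf v:=\sum_x e_x\otimes v_x$. Its absolute value is therefore at most $\lVert M\rVert_\infty\sum_x a_x^2$. With $M=W-U$ the contribution to $\lambda^2$ is at most $2^{-n}\zeta\cdot2^n\lambda=\zeta\lambda$. The hypothesis $W\mathbf 1=\mathbf 1$ is not needed in this step; it only makes $\zeta$ small in the application.

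For the $W$ piece I use the support condition. When $W_{xy}\ne0$,
\begin{equation*}
  |\langle v_x,v_y\rangle|=|\langle P_x\psi,P_xP_y\,P_y\psi\rangle|
  \le\lVert P_xP_y\rVert_\infty\,a_xa_y\le\eta\,a_xa_y .
\end{equation*}
Combining this with the Schur test $\sum_{x,y}|W_{xy}|a_xa_y\le\tfrac12\sum_{x,y}|W_{xy}|(a_x^2+a_y^2)\le L\sum_xa_x^2$, which uses the symmetry of $W$, gives a contribution of at most $2^{-n}\cdot L\eta\cdot2^n\lambda=L\eta\lambda$.

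Putting the pieces together yields $\lambda^2\le(\zeta+L\eta)\lambda$, hence $\lambda\le\zeta+L\eta$. To reach the stated form I distinguish two cases. If $L\eta\le1$, then $L\eta\le\sqrt{L\eta}$. If $L\eta>1$, then $\sqrt{L\eta}>1\ge\lVert A\rVert_\infty$, because $A$ is an average of projections. In both cases Eq.~\eqref{eq:sm-average-bound} follows. The step needing the most care is the $W-U$ piece: it requires the block-matrix (Gram) argument rather than an entrywise bound, because only the operator norm of $W-U$ is controlled and an entrywise estimate would lose a factor $2^n$.
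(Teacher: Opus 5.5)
Your proof is correct, and it takes a different and sharper route than the paper's. The $\zeta$-piece is essentially the same in both: a H\"older bound of $\trace[(W-U)G]$ by $\lVert W-U\rVert_\infty$ times the trace of a Gram matrix. The paper first centers, $H=(I-U)G(I-U)$, using $W\mathbf 1=\mathbf 1$, which gives $\zeta(\lambda-\lambda^2)$ instead of your $\zeta\lambda$; it then discards the extra $\zeta\lambda^2$ anyway.

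The real difference is the $W$-piece. The paper uses only the crude entrywise estimate $|\langle P_xv,P_yv\rangle|\le\lVert P_xP_y\rVert_\infty$ together with $\sum_{x,y}|W_{xy}|\le 2^nL$. This loses a factor of $\lambda$ on the right-hand side, leaves the quadratic inequality $\lambda^2-\zeta\lambda\le L\eta$, and hence the square root. Your estimate $|\langle v_x,v_y\rangle|\le\lVert P_xP_y\rVert_\infty\,a_xa_y$, combined with the Schur test, keeps $\sum_xa_x^2=2^n\lambda$ on the right. You therefore get the linear inequality $\lambda\le\zeta+L\eta$. This is never weaker than the stated bound and is strictly stronger when $0<L\eta<1$; your case split then recovers the lemma as written.

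You are also right that $W\mathbf 1=\mathbf 1$ plays no role in the lemma itself. It matters only in Lemma~\ref{lem:sm-weight}, where it forces the $z=0$ eigenvalue of $W-U$ to vanish.

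Two small remarks. First, your chain of inequalities holds for every unit vector $\psi$, so it actually proves the operator inequality $A^2\le(\zeta+L\eta)A$. Taking the supremum over $\psi$ then gives the result directly, so no eigenvector or approximate eigenvector is needed. Second, if your version is fed into Proposition~\ref{prop:finite-block}, the exponents $1/4$ and $1/2$ in Eq.~\eqref{eq:finite-block} become $1/2$ and $1$. That doubles the second exponent in the proof of Theorem~\ref{thm:antidegradable}.
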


\begin{proof}
Since each $P_x$ is a projection, $0\le A\le I$, so
$\lambda:=\lVert A\rVert_\infty\in[0,1]$ is the largest eigenvalue of $A$;
let $\ket v$ be a corresponding unit eigenvector. Put
\begin{equation}
  \ket{y_x}:=2^{-n/2}P_x\ket v ,
  \qquad
  G_{xy}:=\braket{y_x|y_y},
\end{equation}
so that $G\ge0$ is a Gram matrix. Using $P_x^2=P_x$,
\begin{equation}
  \trace G=\sum_x\braket{y_x|y_x}
  =2^{-n}\sum_x\bra vP_x\ket v=\bra vA\ket v=\lambda ,
  \label{eq:sm-trG}
\end{equation}
and, since $\sum_xP_x=2^nA$,
\begin{equation}
  \sum_{x,y}G_{xy}=2^{-n}\bra v\Bigl(\sum_xP_x\Bigr)^2\ket v
  =2^{n}\bra vA^2\ket v=2^{n}\lambda^2 ,
  \qquad\text{hence}\qquad
  \trace[UG]=2^{-n}\sum_{x,y}G_{xy}=\lambda^2 .
  \label{eq:sm-trUG}
\end{equation}

Because $W$ is symmetric with $W\mathbf 1=\mathbf 1$ we have
$WU=UW=U$ and $U^2=U$, so
\begin{equation}
  (I-U)(W-U)(I-U)=W-U .
  \label{eq:sm-centering}
\end{equation}
Set $H:=(I-U)G(I-U)\ge0$. By Eqs.~\eqref{eq:sm-trG}
and~\eqref{eq:sm-trUG},
\begin{equation}
  \trace H=\trace\left[(I-U)G\right]=\lambda-\lambda^2\ge0 .
\end{equation}
Writing $S:=\sum_{x,y}W_{xy}G_{xy}$ and using the symmetry of $W$ to
identify $S=\trace[WG]$, Eq.~\eqref{eq:sm-centering} and cyclicity give
\begin{equation}
  S=\trace[UG]+\trace[(W-U)G]
   =\lambda^2+\trace[(W-U)H]
   \ \ge\ \lambda^2-\lVert W-U\rVert_\infty\trace H
   \ \ge\ \lambda^2-\zeta\left(\lambda-\lambda^2\right),
  \label{eq:sm-lower}
\end{equation}
where the first inequality is H\"older's inequality
$|\trace[XH]|\le\lVert X\rVert_\infty\lVert H\rVert_1$ together with
$\lVert H\rVert_1=\trace H$ for $H\ge0$.

On the other hand, $G_{xy}=2^{-n}\bra vP_xP_y\ket v$, so
\begin{equation}
  |S|\le\sum_{x,y}|W_{xy}|\,|G_{xy}|
   \le2^{-n}\sum_{x,y}|W_{xy}|\,\lVert P_xP_y\rVert_\infty
   \le2^{-n}\eta\sum_{x,y}|W_{xy}|
   \le\eta L ,
  \label{eq:sm-upper}
\end{equation}
where the third step uses that $W_{xy}\ne0$ only for pairs with
$\lVert P_xP_y\rVert_\infty\le\eta$, and the fourth uses
$\sum_{x,y}|W_{xy}|\le2^nL$.

Combining Eqs.~\eqref{eq:sm-lower} and~\eqref{eq:sm-upper} and discarding
the nonnegative term $\zeta\lambda^2$ gives
$\lambda^2-\zeta\lambda\le L\eta$. The positive root of the corresponding
quadratic yields
\begin{equation}
  \lambda\le\frac{\zeta+\sqrt{\zeta^2+4L\eta}}{2}\le\zeta+\sqrt{L\eta},
\end{equation}
using $\sqrt{a+b}\le\sqrt a+\sqrt b$.
\end{proof}

% =====================================================================
\section{Construction of the weight matrix}
\label{sm:weight}
% =====================================================================

The matrix $W$ is built from a low-degree approximation provided by Ref.~\cite{deWolfNote} to the Boolean
function $\operatorname{NOR}_n$, defined by $\operatorname{NOR}_n(0)=1$ and
$\operatorname{NOR}_n(x)=0$ for $x\ne0$.

\begin{theorem}[de Wolf~\cite{deWolfNote}]\label{thm:sm-approxdeg}
There is a universal constant $C_{\mathrm{app}}\ge1$ such that for every
$\varepsilon\in[2^{-n},1/3]$ there is a real multilinear polynomial $p$ on
$\{0,1\}^n$ of degree at most
$C_{\mathrm{app}}\bigl(\sqrt n+\sqrt{n\ln(1/\varepsilon)}\bigr)$
with $|p(x)-\operatorname{NOR}_n(x)|\le\varepsilon$ for all
$x\in\{0,1\}^n$.
\end{theorem}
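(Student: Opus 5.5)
The statement reduces to a univariate extremal problem, and I would prove it in three steps: symmetrize, build a univariate polynomial, then multilinearize. Since $\operatorname{NOR}_n$ depends only on the Hamming weight $|x|$, it suffices to construct a univariate real polynomial $r$ of degree at most $C_{\mathrm{app}}\bigl(\sqrt n+\sqrt{n\ln(1/\varepsilon)}\bigr)$ with $r(0)=1$ and $|r(k)|\le\varepsilon$ for $k=1,\dots,n$. Then $p(x):=r(x_1+\dots+x_n)$, with every $x_i^2$ replaced by $x_i$, is multilinear, agrees with $r(|x|)$ on $\{0,1\}^n$, and has no larger degree. No Minsky--Papert symmetrization is needed in this direction, because we are only constructing an upper bound.

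For the univariate polynomial I would start from a rescaled Chebyshev polynomial. First fix a threshold $m\in\{1,\dots,n\}$. Then map $[m,n]$ affinely onto $[-1,1]$ by $z(k):=(n+m-2k)/(n-m)$, so that $k=0$ is sent to $z_0=1+2m/(n-m)$. Set
\[
r(k):=\frac{T_D(z(k))}{T_D(z_0)}\cdot\prod_{j=1}^{m-1}\Bigl(1-\frac kj\Bigr).
\]
The product factor has degree $m-1$, equals $1$ at $k=0$, and vanishes at $k=1,\dots,m-1$. On $[m,n]$ it is bounded in absolute value by $\binom{k-1}{m-1}\le(en/m)^{m}$. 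The Chebyshev factor is at most $1/T_D(z_0)$ there, and $T_D(1+\delta)\ge\tfrac12 e^{D\sqrt{\delta}}$ for small $\delta$. This gives $|r(k)|\le 2(en/m)^m e^{-D\sqrt{2m/n}}$ for $k\ge m$. I would take $m\approx\ln(1/\varepsilon)$ (or $m=1$ when $\varepsilon$ is constant, which recovers the classical $O(\sqrt n)$ bound) and choose $D$ so that the exponent beats $m\ln(en/m)+\ln(2/\varepsilon)$. The constraint $\varepsilon\ge2^{-n}$ ensures $m\le n$.

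The main obstacle is exactly this accounting. The crude estimate above gives $D\approx\sqrt{nm}\,\ln(en/m)$, so it loses a logarithmic factor relative to the claimed $\sqrt{n\ln(1/\varepsilon)}$. To remove it, I would exploit the fact that the large values of $\binom{k-1}{m-1}$ occur only for $k$ of order $n$, where $z(k)$ lies deep inside $[-1,1]$. There the Chebyshev factor alone is not small, so instead I would replace the plain product by a product of shifted Chebyshev-type factors or by a carefully scaled composition of the two. That is, I would take the polynomial extremal for ``value $1$ at $0$, small on $\{1,\dots,n\}$'' as in de Wolf's note, and bound it via a direct estimate of $T_D$ near $1+2m/n$ against the discrete gaps. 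If this refinement resists a clean argument, the fallback is to invoke the construction of Ref.~\cite{deWolfNote} directly. For the applications in the main text, only the form $\deg p=O\bigl(\sqrt{n\ln(1/\varepsilon)}\bigr)$ matters, and the constant $C_{\mathrm{app}}$ may be taken as large as needed.
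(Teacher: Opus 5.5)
Your construction does not reach the stated degree, and the step meant to repair it is not an argument. On $[m,n]$ the factor $\prod_{j=1}^{m-1}(1-k/j)$ equals $\pm\binom{k-1}{m-1}$. At $k=n$, where $|T_D(z(n))|=|T_D(-1)|=1$, your estimate is attained exactly: $|r(n)|=\binom{n-1}{m-1}/T_D(z_0)$. Since $\operatorname{arccosh}(z_0)\le2\sqrt{m/(n-m)}$, this forces $D\gtrsim\sqrt{n/m}\,\bigl(\ln\binom{n-1}{m-1}+\ln(1/\varepsilon)\bigr)$. No choice of $m$ rescues the ansatz: for $\ln(1/\varepsilon)=\sqrt n$, say, every $m$ gives a degree exceeding $\sqrt{n\ln(1/\varepsilon)}$ by a factor growing at least like $\sqrt{\ln n}$. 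So no universal $C_{\mathrm{app}}$ comes out, and enlarging the constant cannot absorb a factor that grows with $n$. The proposed refinement (``shifted Chebyshev-type factors'', ``the extremal polynomial as in de Wolf's note'') is left unspecified. Moreover, de Wolf's note does not write down such a polynomial; it obtains the degree bound from a small-error quantum search algorithm via the polynomial method. The paper gives no proof of its own and simply imports the theorem from Ref.~\cite{deWolfNote}. Your fallback therefore coincides with the paper's treatment, but the argument you actually give proves only a weaker bound with an extra logarithmic factor.

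The missing idea is to create the zeros at small weights with factors that stay bounded by $1$ on all of $[0,n]$, instead of with $(1-k/j)$. For $1\le j<m$ put $D_j:=\lceil\tfrac{\pi}{4}\sqrt{n/j}\,\rceil$, $\alpha_j:=\tfrac{n}{j}\sin^2\bigl(\tfrac{\pi}{4D_j}\bigr)\le1$, and $f_j(k):=T_{D_j}(1-2\alpha_jk/n)$. Since the largest zero of $T_{D_j}$ is $\cos(\pi/(2D_j))$, one has $f_j(0)=1$, $f_j(j)=0$, and $|f_j|\le1$ on $[0,n]$. For weights $k\ge m$, keep your rescaled Chebyshev factor $g(k):=T_D(z(k))/T_D(z_0)$, but only with $D:=\lceil\sqrt{n/(2m)}\,\rceil$. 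Convexity of $T_D$ on $[1,\infty)$ together with $T_D'(1)=D^2$ gives $T_D(z_0)\ge1+2mD^2/n\ge2$, so $|g|\le\tfrac12$ on $[m,n]$. Meanwhile $0<g\le1$ on $[0,m]$ because $T_D$ is increasing on $[1,\infty)$. Take $m:=\lceil\log_2(1/\varepsilon)\rceil\le n$ and $r:=g^{m}\prod_{j<m}f_j$. Then $r(0)=1$, $r(k)=0$ for $1\le k<m$, and $|r(k)|\le2^{-m}\le\varepsilon$ for $k\ge m$. Its degree is at most $m\lceil\sqrt{n/(2m)}\,\rceil+\sum_{j<m}\lceil\tfrac{\pi}{4}\sqrt{n/j}\,\rceil=O(\sqrt{nm})=O\bigl(\sqrt n+\sqrt{n\ln(1/\varepsilon)}\bigr)$; in the edge case $m=n$, drop $g$ and take $r:=\prod_{j\le n}f_j$. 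This mirrors small-error quantum search: exact Grover search for each small number $j$ of marked items costs $O(\sqrt{n/j})$, followed by $m$ rounds tuned to at least $m$ marked items. Your symmetrization and multilinearization steps then finish the proof.
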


\begin{lemma}\label{lem:sm-weight}
Let $C_{\mathrm{app}}$ be as in Theorem~\ref{thm:sm-approxdeg} and set
$a:=1/(4C_{\mathrm{app}}^2)$. For every integer $t$ with
\begin{equation}
  4C_{\mathrm{app}}\sqrt n\le t\le\frac n2
  \label{eq:sm-t-range}
\end{equation}
there is a real symmetric matrix $W\in\mathbb R^{2^n\times2^n}$, indexed by
$\{0,1\}^n$, satisfying
\begin{align}
  &W\mathbf 1=\mathbf 1, \label{eq:sm-W1}\\
  &W_{xy}\ne0\ \Longrightarrow\ s(x,y)\le t, \label{eq:sm-W2}\\
  &\lVert W-U\rVert_\infty\le2\exp\!\left(-\frac{at^2}{n}\right),
  \label{eq:sm-W3}\\
  &\max_x\sum_y|W_{xy}|\le\left(\sum_{j=0}^{t}\binom nj\right)^{1/2}.
  \label{eq:sm-W4}
\end{align}
\end{lemma}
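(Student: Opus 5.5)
We build $W$ as a convolution matrix over the group $\mathbb Z_2^n$, so that all four properties reduce to statements about the Fourier coefficients of one polynomial.

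\textbf{Choice of the polynomial.} Set $\varepsilon:=\exp(-t^2/(4C_{\mathrm{app}}^2 n))$. The lower bound in Eq.~\eqref{eq:sm-t-range} gives $t^2/(4C_{\mathrm{app}}^2n)\ge 4$, so $\varepsilon\le e^{-4}<1/3$. If $\varepsilon<2^{-n}$, we raise it to $2^{-n}$; this case is handled below. Theorem~\ref{thm:sm-approxdeg} then gives a polynomial $p$ of degree at most
\[
C_{\mathrm{app}}\sqrt n+C_{\mathrm{app}}\sqrt{n\ln(1/\varepsilon)}=C_{\mathrm{app}}\sqrt n+t/2\le t/4+t/2\le t,
\]
with $|p-\mathrm{NOR}_n|\le\varepsilon$. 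In particular $p(0)\ge 1-\varepsilon>0$. Put $q:=p/p(0)$. Then $q(0)=1$ and $|q(z)|\le\varepsilon/(1-\varepsilon)\le 2\varepsilon$ for $z\ne0$.

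\textbf{Definition of $W$.} Write $q(z)=\sum_{|S|\le t}c_S\chi_S(z)$ with characters $\chi_S(z)=(-1)^{S\cdot z}$. Let $\widehat q(u):=2^{-n}\sum_z q(z)(-1)^{u\cdot z}$, so that $\widehat q(u)=c_u$ is supported on $|u|\le t$. Define
\[
W_{xy}:=\widehat q(\mathbf 1\oplus x\oplus y).
\]
This is a convolution matrix on $\mathbb Z_2^n$, so it is symmetric. The weight of $\mathbf 1\oplus x\oplus y$ equals $s(x,y)$, which gives Eq.~\eqref{eq:sm-W2} directly.

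\textbf{Spectrum of $W$.} The characters $\chi_z(x)=(-1)^{z\cdot x}$ diagonalize every convolution matrix:
\[
\sum_y W_{xy}(-1)^{z\cdot y}=(-1)^{z\cdot x}\sum_u\widehat q(u)(-1)^{z\cdot(u\oplus\mathbf 1)}=(-1)^{|z|}q(z)\,\chi_z(x).
\]
Hence the eigenvalue of $W$ on $\chi_z$ is $\pm q(z)$. For $z=0$ this is $q(0)=1$, so $W\mathbf 1=\mathbf 1$, which is Eq.~\eqref{eq:sm-W1}. The matrix $U$ is the projector onto $\chi_0$ and shares the same eigenbasis. So $W-U$ has eigenvalue $0$ on $\chi_0$ and $\pm q(z)$ elsewhere, giving $\lVert W-U\rVert_\infty\le 2\varepsilon$. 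This is Eq.~\eqref{eq:sm-W3} with $a=1/(4C_{\mathrm{app}}^2)$.

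\textbf{Row sums.} Every row of $W$ is a translate of $\widehat q$, so $\sum_y|W_{xy}|=\sum_u|\widehat q(u)|$. By Cauchy--Schwarz over the at most $\sum_{j\le t}\binom nj$ support points, combined with Parseval $\sum_u\widehat q(u)^2=2^{-n}\sum_zq(z)^2$, we get
\[
\sum_y|W_{xy}|\le\Bigl(\sum_{j\le t}\tbinom nj\Bigr)^{1/2}\Bigl(2^{-n}\bigl(1+(2^n-1)4\varepsilon^2\bigr)\Bigr)^{1/2}.
\]
The second factor is $\le(2^{-n}+4\varepsilon^2)^{1/2}$, which is at most $1$ since $\varepsilon\le e^{-4}$. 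This gives Eq.~\eqref{eq:sm-W4}.

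\textbf{Main obstacle: the edge cases.} The delicate step is the parameter matching where $\varepsilon$ would fall below $2^{-n}$. There I would instead take exact degree $n$, but $t\le n/2$ forbids that. The better route is to note that $\varepsilon=\exp(-t^2/4C_{\mathrm{app}}^2n)\ge\exp(-n/(16C_{\mathrm{app}}^2))\ge 2^{-n}$, using $t\le n/2$ and $C_{\mathrm{app}}\ge1$. So this case never actually occurs.

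A second point to watch is the sign $(-1)^{|z|}$. It is harmless because only $|q(z)|$ enters the norm bound, and at $z=0$ the sign is $+1$.
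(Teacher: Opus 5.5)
Your proof is correct and follows the paper's argument essentially step for step. It uses the same de Wolf polynomial normalized to $q(0)=1$, the same matrix $W_{xy}=\widehat q(\mathbf 1\oplus x\oplus y)$ diagonalized by the characters $(-1)^{x\cdot z}$, and the same Cauchy--Schwarz plus Parseval bound on the row sums. The differences are cosmetic: you obtain $W\mathbf 1=\mathbf 1$ from the $z=0$ eigenvalue rather than by summing a row directly, and you cite the degree-to-Fourier-support fact instead of checking it on monomials. Your check $\varepsilon\ge 2^{-n}$, which the paper makes upfront, shows that your provisional ``raise $\varepsilon$'' branch never occurs.
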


\begin{proof}
Set $\varepsilon:=\exp\bigl(-t^2/(4C_{\mathrm{app}}^2n)\bigr)$. The upper bound in
Eq.~\eqref{eq:sm-t-range} gives
$t^2/(4C_{\mathrm{app}}^2n)\le n/(16C_{\mathrm{app}}^2)< n\ln2$, hence $\varepsilon\ge2^{-n}$; the lower
bound gives $t^2/(4C_{\mathrm{app}}^2n)\ge4$, hence $\varepsilon\le e^{-4}<1/3$. So
Theorem~\ref{thm:sm-approxdeg} applies and its degree bound reads
\begin{equation}
  C_{\mathrm{app}}\sqrt n+C_{\mathrm{app}}\sqrt{n\ln(1/\varepsilon)}
  =C_{\mathrm{app}}\sqrt n+C_{\mathrm{app}}\sqrt{n\cdot\frac{t^2}{4C_{\mathrm{app}}^2n}}
  =C_{\mathrm{app}}\sqrt n+\frac t2\le\frac t4+\frac t2<t .
\end{equation}
Let $p$ be the resulting polynomial. Since $p(0)\ge1-\varepsilon>0$ we may
normalize $q:=p/p(0)$, which is multilinear of degree at most $t$ and obeys
\begin{equation}
  q(0)=1,
  \qquad
  |q(z)|\le\frac{\varepsilon}{1-\varepsilon}\le2\varepsilon
  \quad(z\ne0).
  \label{eq:sm-qbounds}
\end{equation}

Write $\widehat q(e):=2^{-n}\sum_{x}q(x)(-1)^{e\cdot x}$ for the normalized
Hadamard transform, with inverse $q(x)=\sum_e\widehat q(e)(-1)^{e\cdot x}$.

\emph{Claim: $\widehat q(e)=0$ whenever $|e|>t$.} It suffices to check this
for a monomial $m_T(x)=\prod_{i\in T}x_i$ with $|T|\le t$. Its Hadamard transform yields
\begin{equation}
  \widehat{m_T}(e)
  =2^{-n}\prod_{i\in T}\Bigl(\sum_{x_i\in\{0,1\}}x_i(-1)^{e_ix_i}\Bigr)
  \prod_{i\notin T}\Bigl(\sum_{x_i\in\{0,1\}}(-1)^{e_ix_i}\Bigr).
\end{equation}
The second product vanishes unless $e_i=0$ for every $i\notin T$, i.e.
unless $\operatorname{supp}(e)\subseteq T$, in which case $|e|\le|T|\le t$.
This proves the claim.

Now define the matrix $W$ via (using  $\mathds 1=(1,\dots,1)$ and $\oplus$ addition modulo two)
\begin{equation}
  W_{xy}:=\widehat q(\mathds 1\oplus x\oplus y).
  \label{eq:sm-W-def}
\end{equation}
The matrix is real because $q$ is, and symmetric because
$x\oplus y=y\oplus x$. The string $\mathds1\oplus x\oplus y$ has a one
exactly at the coordinates where $x$ and $y$ agree, so
$|\mathds1\oplus x\oplus y|=s(x,y)$; the claim therefore gives
Eq.~\eqref{eq:sm-W2}. It also gives the crucial diagonal identity
\begin{equation}
  W_{xx}=\widehat q(\mathds1)=0,
  \label{eq:sm-W-diagonal}
\end{equation}
because $|\mathds1|=n>t$. For fixed $x$ the map
$y\mapsto\mathds1\oplus x\oplus y$
is a bijection of $\{0,1\}^n$, so
\begin{equation}
  \sum_yW_{xy}=\sum_e\widehat q(e)=q(0)=1,
\end{equation}
which is Eq.~\eqref{eq:sm-W1}.

For the spectral bound, let $\rchi_z\in\mathbb R^{2^n}$ have entries
$(\rchi_z)_x=(-1)^{x\cdot z}$; these $2^n$ vectors are mutually orthogonal.
Substituting $e=\mathds1\oplus x\oplus y$ and using
$(-1)^{(a\oplus b)\cdot z}=(-1)^{a\cdot z}(-1)^{b\cdot z}$,
\begin{equation}
  (W\rchi_z)_x=\sum_e\widehat q(e)(-1)^{(\mathds1\oplus x\oplus e)\cdot z}
  =(-1)^{|z|}(-1)^{x\cdot z}\sum_e\widehat q(e)(-1)^{e\cdot z}
  =(-1)^{|z|}q(z)\,(\rchi_z)_x .
\end{equation}
Also $(U\rchi_z)_x=2^{-n}\sum_y(-1)^{y\cdot z}=\delta_{z,0}$, so
$U\rchi_z=\delta_{z,0}\rchi_z$ because $\rchi_0=\mathbf 1$. Hence $W$ and $U$
are simultaneously diagonalized by $\{\rchi_z\}$, with $W-U$ having
eigenvalue $q(0)-1=0$ at $z=0$ and $(-1)^{|z|}q(z)$ otherwise. By
Eq.~\eqref{eq:sm-qbounds},
\begin{equation}
  \lVert W-U\rVert_\infty=\max_{z\ne0}|q(z)|\le2\varepsilon
  =2\exp\!\left(-\frac{t^2}{4C_{\mathrm{app}}^2n}\right),
\end{equation}
which is Eq.~\eqref{eq:sm-W3} with $a=1/(4C_{\mathrm{app}}^2)$.

Finally, using the same bijection, then the claim, then
Cauchy-Schwarz,
\begin{equation}
  \sum_y|W_{xy}|=\sum_e|\widehat q(e)|
  =\sum_{e:\,|e|\le t}|\widehat q(e)|
  \le\left(\sum_e|\widehat q(e)|^2\right)^{1/2}
  \left(\sum_{j=0}^{t}\binom nj\right)^{1/2}.
\end{equation}
Parseval's identity and Eq.~\eqref{eq:sm-qbounds} give
\begin{equation}
  \sum_e|\widehat q(e)|^2=2^{-n}\sum_zq(z)^2
  \le2^{-n}\left(1+2^n\cdot4\varepsilon^2\right)
  =2^{-n}+4\varepsilon^2\le\frac12+4e^{-8}<1
\end{equation}
for $n\ge1$, which yields Eq.~\eqref{eq:sm-W4}.
\end{proof}

% =====================================================================
\section{Overlap of two decoder placements}
\label{sm:overlap}
% =====================================================================

Finally, we have to bound the overlap of any two $P_x$ and $P_y$. The two placements interact only through the
outputs they share, and the shared dimension is what limits how much
entanglement both can recover.

\begin{lemma}\label{lem:sm-overlap}
For all distinct $x,y\in\{0,1\}^n$, the projections of
Eq.~\eqref{eq:sm-projector} satisfy
\begin{equation}
  \lVert P_xP_y\rVert_\infty
  \le\min\left\{1,\frac{d^{\,s(x,y)}}{M}\right\},
  \qquad d=|B| .
  \label{eq:sm-overlap}
\end{equation}
\end{lemma}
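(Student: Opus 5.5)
The plan is to write $\lVert P_xP_y\rVert_\infty^2=\lVert P_xP_yP_x\rVert_\infty$ and compute the sandwich explicitly, using the transpose trick for maximally entangled states. The bound $1$ is trivial because both are projections, so only $d^{\,s}/M$ with $s:=s(x,y)$ needs proof. Split the systems touched by the two placements as follows: $R$; the shared outputs $C:=\bigotimes_{j:x_j=y_j}B_{j,x_j}$, of dimension $D:=d^{\,s}$; $X$, the outputs in $B_x$ but not in $B_y$, together with $G_x$; and $Y$, the outputs in $B_y$ but not in $B_x$, together with $G_y$. Then $P_x$ acts as $I$ on $Y$, $P_y$ acts as $I$ on $X$, and they overlap only on $R\otimes C$.

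\emph{Step 1 (matrix-unit expansion).} Expanding $\Phi_{R\widehat S}=M^{-1}\sum_{i,j}\ketbra ij_R\otimes\ketbra ij_{\widehat S}$ gives
\[
P_x=\frac1M\sum_{i,j}\ketbra ij_R\otimes T^x_{ij},\qquad T^x_{ij}:=U_x^\dagger(\ketbra ij_{\widehat S}\otimes I_{E_x})U_x ,
\]
with $T^x_{ij}$ acting on $CX$. The family $\{T^x_{ij}\}$ is a $*$-representation of the $M\times M$ matrix units, since $T^x_{ij}T^x_{kl}=\delta_{jk}T^x_{il}$; the analogous statement holds for $T^y_{ij}$ on $CY$. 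The transpose trick $(Z_R\otimes I)\ket{\Phi}=(I\otimes Z^{\mathsf T}_{\widehat S})\ket\Phi$ then gives $P_x(\ketbra ij_R\otimes I)P_x=P_x(I_R\otimes T^x_{ji})P_x$. Substituting the expansion of $P_y$ yields
\[
P_xP_yP_x=P_x\bigl(I_R\otimes\Gamma\bigr)P_x,\qquad \Gamma:=\frac1M\sum_{i,j}T^x_{ji}\,T^y_{ij}\quad\text{on }CXY .
\]

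\emph{Step 2 (reduce to the shared system).} If $C$ were trivial ($s=0$), $T^x$ and $T^y$ would act on disjoint factors. In that case $\sum_{i,j}T^x_{ji}\otimes T^y_{ij}$ is unitarily the swap between two copies of $\mathbb C^M$, restricted to the relevant subspaces. One should then get the sharper identity $P_xP_yP_x=M^{-2}P_x$, matching the teleportation overlap $|\langle\Phi_{RX}\psi_Y|\Phi_{RY}\phi_X\rangle|\le1/M$. For general $s$, expand the $C$-dependence in a matrix-unit basis $\{e_{ab}\}_{a,b=1}^D$ of $\mathcal L(C)$, writing $T^x_{ji}=\sum_{a,b}e_{ab}\otimes X^{ab}_{ji}$ and similarly for $T^y$. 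Each of the $D^2$ resulting terms has the disjoint-support form of the $s=0$ case. The aim is to show that every block contributes at most $M^{-2}$, in the form $P_x(I\otimes\cdot)P_x\le M^{-2}P_x$, and that the $C$-indices recombine through Cauchy–Schwarz with a loss of at most $D^2$. This would give $\lVert P_xP_yP_x\rVert_\infty\le D^2/M^2$, i.e.\ the claimed $\lVert P_xP_y\rVert_\infty\le d^{\,s}/M$.

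\emph{Main obstacle.} The delicate point is Step~2: getting the factor $D/M$ rather than $\sqrt{D/M}$. A soft monogamy argument gives only the weaker bound. For $\ket v\in\operatorname{ran}P_y$, the register $R$ is maximally entangled with $\widehat S_y$, so $\rho_{RX}=I_R/M\otimes\sigma_X$. Hence $H_{\min}(R|CX)\ge\log M-\log D$, and the singlet fraction yields $\bra vP_x\ket v\le D/M$, i.e.\ $\lVert P_xP_y\rVert_\infty\le\sqrt{D/M}$. That weaker bound would already suffice for Theorem~\ref{thm:antidegradable} after replacing $\eta$ by $\eta^{1/2}$ in Proposition~\ref{prop:finite-block}. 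I would therefore first try to push the exact algebra of Step~2 through, tracking the partial traces of $T^x_{ij}$ over $X$ using $\operatorname{Tr}_{\widehat S}$-normalization, i.e.\ $\sum_iT^x_{ii}=I$. If the $D^2$ recombination turns out to be lossy, I would fall back on the $\sqrt{D/M}$ bound and adjust the constants downstream.
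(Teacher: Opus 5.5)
\textbf{There is a genuine gap: Step~2, which carries the whole content of the lemma, is not an argument, and the proposal does not prove $\lVert P_xP_y\rVert_\infty\le d^{\,s}/M$.} Your Step~1 is correct. The matrix-unit expansion and the transpose trick do give $P_xP_yP_x=P_x(I_R\otimes\Gamma)P_x$, and $P_xP_yP_x=M^{-2}P_x$ when $s=0$.

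The Step~2 plan fails because slicing $C$ into matrix units does not reproduce the $s=0$ situation:
\begin{itemize}
\item The product $T^x_{ji}T^y_{ij}=\sum_{a,b,b'}e_{ab'}\otimes X^{ab}_{ji}\otimes Y^{bb'}_{ij}$ carries three $C$-indices, not two.
\item The blocks $X^{ab}_{ji}$ no longer obey the matrix-unit relations. Those relations are exactly what made the $s=0$ sum a swap.
\item $P_x$ is not block-diagonal in $C$.
\item ``$P_x(I\otimes\cdot)P_x\le M^{-2}P_x$'' has no meaning for non-Hermitian blocks.
\end{itemize}
So neither the per-block $M^{-2}$ nor the ``loss of $D^2$'' is available.

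Your fallback gives only $\sqrt{D/M}$. That would still suffice downstream with worse constants, but it is not the lemma. Its intermediate step $H_{\min}(R|CX)\ge\log M-\log D$ is also wrong: that chain rule holds for \emph{classical} side information. For quantum $C$ the loss is $2\log D$. The loss is attained, for a suitable decoder, when a $D$-dimensional factor of $R$ is maximally entangled with $C$ and the rest with $Y$; this is compatible with $\ket v\in\operatorname{ran}P_y$.

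The missing idea is to avoid squaring and compare vectors in the two ranges directly. The paper writes $\ket\psi=V_x\ket\alpha$ and $\ket\phi=V_y\ket\beta$. Both the $R$- and $C$-contractions are then diagonal, so only $D$ terms survive: $\braket{\psi|\phi}=M^{-1}\sum_{c=1}^{D}\bra\alpha K_c\ket\beta$. Each $K_c=\sum_iA_{i,c}\otimes B_{i,c}^\dagger$ factors as a co-isometry times an isometry, because only the diagonal $(c,c)$ blocks of $U_x^\dagger U_x=I$ and $U_y^\dagger U_y=I$ are needed. Hence $\lVert K_c\rVert_\infty\le 1$ and $|\braket{\psi|\phi}|\le D/M$.

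Alternatively, your monogamy route closes once the bookkeeping is fixed:
\begin{itemize}
\item For $\ket v\in\operatorname{ran}P_y$ one has $\rho_{RX}=\frac{I_R}{M}\otimes\beta_X$.
\item The twirling inequality $\rho_{RCX}\le D\,\rho_{RX}\otimes I_C$ then gives $\rho_{RCX}\le\frac{D^2}{M}\,I_R\otimes\sigma_{CX}$ with $\sigma_{CX}:=\beta_X\otimes I_C/D$.
\item Since $\trace[\Phi_{R\widehat S}(I_R\otimes\tau)]=1/M$ for every state $\tau$, this yields $\bra vP_x\ket v\le D^2/M^2$. Equivalently, use the correct loss $2\log D$ together with the exact singlet-fraction identity $\max F=2^{-H_{\min}(R|CX)}/M$ rather than $F\le 2^{-H_{\min}}$.
\end{itemize}
Then $\lVert P_xP_y\rVert_\infty^2=\lVert P_yP_xP_y\rVert_\infty=\sup_{v}\bra vP_x\ket v\le D^2/M^2$, which is exactly the lemma with no loss.
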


\begin{proof}
The bound by one is immediate. Fix $x\ne y$ and group the systems shared and
used exclusively by the two placements as
\begin{equation}
  C_{xy}:=\bigotimes_{j:\,x_j=y_j}B_{j,x_j},\quad
  X:=G_x\otimes\bigotimes_{j:\,x_j\ne y_j}B_{j,x_j},\quad
  Y:=\bigotimes_{j:\,x_j\ne y_j}B_{j,y_j}\otimes G_y .
\end{equation}
After canonical swap identifications, write
$B_xG_x\simeq X C_{xy}$ and $B_yG_y\simeq C_{xy} Y$, and use the fixed
global order $R\,X\, C_{xy}\,Y$. The shared dimension is
$D:=|C_{xy}|=d^{s(x,y)}$; all other systems carry identities and may be omitted.

For two projections $P_x,P_y$,
\begin{equation}
  \lVert P_xP_y\rVert_\infty=
  \sup_{\substack{P_x\ket{\psi}=\ket{\psi},\ P_y\ket{\phi}=\ket{\phi}\\
                   \braket{\psi|\psi} = \braket{\phi|\phi} = 1}}
  |\braket{\psi|\phi}|.
  \label{eq:sm-projection-angle}
\end{equation}
Choose an orthonormal basis $\{\ket c\}_{c=1}^{D}$ of $C_{xy}$ and define
\begin{align}
  A_{i,c}&:=
  (\bra i_{\widehat S}\otimes I_{E_x})
  U_x(I_X\otimes\ket c_{C_{xy}}):X\to E_x,\nonumber\\
  B_{i,c}&:=
  (\bra i_{\widehat S}\otimes I_{E_y})
  U_y(\ket c_{C_{xy}}\otimes I_Y):Y\to E_y .
  \label{eq:sm-blocks}
\end{align}
Taking the $(c,c)$ block of $U_x^\dagger U_x=I_{XC}$ and of
$U_y^\dagger U_y=I_{CY}$ gives
\begin{equation}
  \sum_{i=1}^{M}A_{i,c}^\dagger A_{i,c}=I_X,
  \qquad
  \sum_{i=1}^{M}B_{i,c}^\dagger B_{i,c}=I_Y .
  \label{eq:sm-block-unitarity}
\end{equation}
Define $V_x$ and $V_y$ by the two right-hand sides below. Because
$U_x,U_y$ are unitary and $\ket{\Phi_M}$ is normalized, these maps are
isometries with the same  ranges as  $P_x$ and
$P_y$, respectively. Hence every pair of normalized states in
the two ranges can be written, for some normalized
$\ket\alpha$ on $ E_x Y$ and
$\ket\beta$ on $ X E_y$, as
\begin{align}
  \ket\psi=V_x\ket\alpha&=\frac1{\sqrt M}\sum_{i,c}\ket i_R\otimes
  (A_{i,c}^\dagger\otimes\ket c_{C_{xy}}\otimes I_Y)\ket\alpha,\nonumber\\
  \ket\phi=V_y\ket\beta&=\frac1{\sqrt M}\sum_{i,c}\ket i_R\otimes
  (I_X\otimes\ket c_{C_{xy}}\otimes B_{i,c}^\dagger)\ket\beta .
  \label{eq:sm-range-expansions}
\end{align}
Contracting the $R$ and $C_{xy}$ indices yields
\begin{equation}
  \braket{\psi|\phi}=\frac1M\sum_{c=1}^{D}\bra\alpha K_c\ket\beta,
  \qquad
  K_c:=\sum_{i=1}^{M}A_{i,c}\otimes B_{i,c}^\dagger:
  XE_y\to E_x Y .
  \label{eq:sm-overlap-expansion}
\end{equation}
To bound $K_c$, introduce an $M$-dimensional register $L$ and set
\begin{equation}
  R_c:=\sum_i\ket i_L\otimes A_{i,c}\otimes I_{E_y},
  \qquad
  S_c:=\sum_i\bra i_L\otimes I_{E_x}\otimes B_{i,c}^\dagger .
\end{equation}
Then $S_cR_c=K_c$, while Eq.~\eqref{eq:sm-block-unitarity} gives
$R_c^\dagger R_c=I_{XE_y}$ and $S_cS_c^\dagger=I_{E_xY}$. Hence $R_c$ is
an isometry, $S_c$ is a co-isometry, and $\lVert K_c\rVert_\infty\le1$.
Equation~\eqref{eq:sm-overlap-expansion} gives
$|\braket{\psi|\phi}|\le D/M$. Taking the supremum in
Eq.~\eqref{eq:sm-projection-angle} and combining with the trivial bound by
one proves Eq.~\eqref{eq:sm-overlap}.
\end{proof}

% =====================================================================
\section{Proof of Proposition~\ref{prop:finite-block} and Theorem~\ref{thm:antidegradable}}
\label{sm:assembly}
% =====================================================================
Finally, we can put all ingredients together and prove Proposition~\ref{prop:finite-block} in the main text.
\begin{proof}[Proof of Proposition~\ref{prop:finite-block}]
Let $t$ satisfy Eq.~\eqref{eq:sm-t-range} and let $W$ be the matrix of
Lemma~\ref{lem:sm-weight}. Equations~\eqref{eq:sm-W2} and
\eqref{eq:sm-W-diagonal} show that $W_{xy}\ne0$ forces both $x\ne y$ and
$s(x,y)\le t$, so Lemma~\ref{lem:sm-overlap} bounds
$\lVert P_xP_y\rVert_\infty\le\eta:=\min\{1,d^{\,t}/M\}$ on the support of
$W$. Lemma~\ref{lem:sm-average} with
$\zeta=2\exp(-at^2/n)$ and $L=\bigl(\sum_{j\le t}\binom nj\bigr)^{1/2}$
then gives
\begin{equation}
  F_n\le\lVert A\rVert_\infty\le
  2\exp\!\left(-\frac{at^2}{n}\right)
  +\left(\sum_{j=0}^{t}\binom nj\right)^{1/4}
  \min\left\{1,\frac{d^{\,t}}{M}\right\}^{1/2},
\end{equation}
which is Eq.~\eqref{eq:finite-block}. The constants $a=1/(4C_{\mathrm{app}}^2)$ and $C_{\mathrm{app}}$ are
universal, and no property of $\rho_{RA'^{\,n}}$ or $\mathcal D_n$ was used
beyond Eq.~\eqref{eq:sm-marginal-invariance}.
\end{proof}

Theorem~\ref{thm:antidegradable} follows by the choice
$t=\lfloor\delta n\rfloor$ made in the main text and via the calculation starting with Eq.~\eqref{eq:delta-choice}.

% =====================================================================
\section{Proof of Lemma~\ref{lem:symmetric-rate}}
\label{sm:symmetric-rate}
% =====================================================================

\begin{proof}
Let $\mathcal M$ be antidegradable with output dimension $d_{\mathcal M}$, so
Proposition~\ref{prop:finite-block} applies to it. Set
\begin{equation}
  K:=\sqrt{\frac{2\ln2}{a}},
  \qquad
  c_0:=\frac{1}{16K^2},
\end{equation}
and let $0<c\le c_0$. Consider any generalized $(n,M)$ Bell-target scheme
for $\mathcal M$ with error at most $1-\lambda_n$, where
$\lambda_n\ge2^{-cn}/10$. Then
\begin{equation}
  F_n=1-\varepsilon_n^2\ge1-(1-\lambda_n)^2
  =\lambda_n(2-\lambda_n)\ge\lambda_n\ge\frac{2^{-cn}}{10}.
  \label{eq:sm-fidelity-lower}
\end{equation}

Put $\theta:=K\sqrt c\le1/4$ and $t:=\lceil\theta n\rceil$. Then
$t/n\to\theta$, and for all sufficiently large $n$ one has
$4C_{\mathrm{app}}\sqrt n\le t\le n/2$, so Eq.~\eqref{eq:t-range} applies. The choice of
$K$ is
made precisely so that
\begin{equation}
  a\theta^2\log e=aK^2c\log e=2c ,
  \qquad\text{whence}\qquad
  2\exp\!\left(-\frac{at^2}{n}\right)\le2\cdot2^{-2cn}
  \le\frac12\cdot\frac{2^{-cn}}{10}
\end{equation}
once $2^{cn}\ge40$. For such $n$ the first term of
Eq.~\eqref{eq:finite-block} accounts for at most half of the lower bound
in Eq.~\eqref{eq:sm-fidelity-lower}, so the second term must satisfy
\begin{equation}
  \left(\sum_{j=0}^{t}\binom nj\right)^{1/4}
  \min\left\{1,\frac{d_{\mathcal M}^{\,t}}{M}\right\}^{1/2}
  \ \ge\ \frac{2^{-cn}}{20}.
\end{equation}
If the minimum equals one then $M\le d_{\mathcal M}^{\,t}$ and
$\frac1n\log M\le(t/n)\log d_{\mathcal M}$, whose limsup is below
$g_{\mathcal M}(c)$. Otherwise, using
$\sum_{j\le t}\binom nj\le2^{nh_2(t/n)}$ and rearranging,
\begin{equation}
  M\le400\,d_{\mathcal M}^{\,t}\,2^{2cn}\,2^{nh_2(t/n)/2},
  \qquad\text{so}\qquad
  \frac1n\log M\le\frac tn\log d_{\mathcal M}
  +\frac12h_2(t/n)+2c+\frac1n\log400.
\end{equation}
Taking $n\to\infty$ and using $t/n\to\theta=K\sqrt c$ gives
Eq.~\eqref{eq:symmetric-rate-bound}. The term $2c$ arises because the square root in the second term of
Eq.~\eqref{eq:finite-block} is squared when solving for $M$; fixed constants,
including $400$, contribute only $o(1)$. Finally
$g_{\mathcal M}(c)\to0$ as $c\downarrow0$ because
$h_2(u)\to0$ as $u\downarrow0$.
\end{proof}

% =====================================================================
\section{Flagged channels}
\label{sm:flagged}
% =====================================================================
Here, we prove Proposition~\ref{prop:flagged} in the main text.

We first establish an elementary dimension bound used for the degenerate case
$q=1$.

\begin{lemma}\label{lem:sm-dimension}
Let $\Lambda:\bar B\to\widehat S$ be a channel with $|\bar B|=D$ and
$|\widehat S|=|R|=M$, and let $\sigma_{R\bar B}$ be any state. Then
\begin{equation}
  \bra{\Phi_M}\left(\operatorname{id}_R\otimes\Lambda\right)(\sigma)
  \ket{\Phi_M}\le\frac DM .
\end{equation}
\end{lemma}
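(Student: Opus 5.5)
The plan is to reduce everything to the operator-norm bound on a single projector and then use that the decoded state is a channel output of a $D$-dimensional input. First, I will dilate $\Lambda$ by a Stinespring isometry $V:\bar B\to\widehat S E$. The fidelity then becomes
\[
\trace\bigl[(I_R\otimes V^\dagger)(\Phi_M\otimes I_E)(I_R\otimes V)\,\sigma_{R\bar B}\bigr],
\]
so it is the expectation of the operator $P:=(I_R\otimes V^\dagger)(\Phi_M\otimes I_E)(I_R\otimes V)$ on $R\bar B$. Since $0\le P\le I$, it suffices to show $\lVert P\rVert_\infty\le D/M$. Here $\lVert P\rVert_\infty=\lVert(\Phi_M\otimes I_E)(I_R\otimes V)\rVert_\infty^2$.

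Next I will compute $\lVert(\bra{\Phi_M}_{R\widehat S}\otimes I_E)(I_R\otimes V)\rVert_\infty$ using the same block expansion as in Lemma~\ref{lem:sm-overlap}. Fix an orthonormal basis $\{\ket c\}_{c=1}^D$ of $\bar B$ and define
\[
A_{i,c}:=(\bra i_{\widehat S}\otimes I_E)V\ket c\in E .
\]
Then the map $\bra{\Phi_M}(I_R\otimes V):R\bar B\to E$ sends $\ket j_R\ket c$ to $M^{-1/2}A_{j,c}$. Isometry of $V$ gives $\sum_i\lVert A_{i,c}\rVert^2=1$ for each $c$.

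The map in question is therefore $T=M^{-1/2}\sum_{j,c}A_{j,c}\bra j\bra c$. Its squared operator norm is bounded by its squared Hilbert--Schmidt norm:
\[
\lVert T\rVert_\infty^2\le\lVert T\rVert_2^2=\frac1M\sum_{c}\sum_j\lVert A_{j,c}\rVert^2=\frac DM .
\]
This gives $\lVert P\rVert_\infty\le D/M$, and hence the claim for every $\sigma_{R\bar B}$, mixed or not.

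No step should be a real obstacle. The only care needed is identifying $P$'s norm with $\lVert T\rVert_\infty^2$, which holds because $\Phi_M\otimes I_E$ is a rank-one-times-identity projector, so $P=T^\dagger T$. An alternative, if one prefers avoiding dilations, is to write $\Lambda$'s Choi representation and use $\bra{\Phi_M}(\operatorname{id}\otimes\Lambda)(\sigma)\ket{\Phi_M}\le\lambda_{\max}$ of the adjoint applied to $\Phi_M$. However, the Hilbert--Schmidt computation above is the most direct route.
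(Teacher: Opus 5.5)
Your proof is correct and is essentially the paper's argument. Your $P$ is exactly $(\operatorname{id}_R\otimes\Lambda^{*})(\Phi_{R\widehat S})$, and bounding $\lVert P\rVert_\infty=\lVert T\rVert_\infty^2$ by $\lVert T\rVert_2^2$ is the paper's step $\lVert P\rVert_\infty\le\trace P=D/M$ for this positive operator. You compute the trace through the Stinespring blocks $A_{i,c}$, whereas the paper reads it off from unitality of $\Lambda^{*}$ via $\trace_R(\operatorname{id}_R\otimes\Lambda^{*})(\Phi_{R\widehat S})=\Lambda^{*}(I_{\widehat S}/M)=I_{\bar B}/M$.
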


\begin{proof}
Let $\Lambda^{*}$ be the adjoint of $\Lambda$ with respect to the
Hilbert-Schmidt inner product; it is completely positive and unital. Then
\begin{equation}
  \bra{\Phi_M}(\operatorname{id}\otimes\Lambda)(\sigma)\ket{\Phi_M}
  =\trace\left[T\,\sigma\right],
  \qquad
  T:=(\operatorname{id}_R\otimes\Lambda^{*})(\Phi_{R\widehat S})\ge0 .
\end{equation}
Since partial trace over $R$ commutes with a map acting on the other
factor, $\trace_R T=\Lambda^{*}(\trace_R\Phi_{R\widehat S})
=\Lambda^{*}(I_{\widehat S}/M)=I_{\bar B}/M$, so
$\trace T=D/M$. As $T\ge0$ we get
$\trace[T\sigma]\le\lVert T\rVert_\infty\le\trace T=D/M$.
\end{proof}

\begin{proof}[Proof of Proposition~\ref{prop:flagged}]
If $w=0$ the claim is Lemma~\ref{lem:sm-dimension} applied to
$\mathcal K_0^{\otimes n}$, and if $q=0$ the channel is
$\mathcal K_1$ tensored with a constant flag, so
Theorem~\ref{thm:antidegradable} applies and the threshold $q\log d_0=0$ is
correct. Assume therefore $0<q<1$, and write $d_1:=|B^{(1)}|$.

The flag register $C$ is classical and is part of the channel output, so the
$n$-fold output state is
$\sum_{z\in\{0,1\}^n}p_z\,\sigma^{(z)}_{R\mathbf B}\otimes\ketbra zz_C$ with
$p_z=q^{k(z)}w^{m(z)}$, where $k(z):=|\{j:z_j=0\}|$, $m(z):=n-k(z)$, and
$\sigma^{(z)}$ is the output of $\bigotimes_j\mathcal K_{z_j}$ on
$\rho_{RA'^{\,n}}$. Since the fidelity is linear in the state,
\begin{equation}
  F_n=\sum_{z\in\{0,1\}^n}p_z f_z ,
  \label{eq:sm-flag-average}
\end{equation}
where $f_z$ is the fidelity achieved by the decoder conditioned on the flag
string $z$. Formally, if $\mathcal D$ is an arbitrary decoder on the
block-diagonal flagged output, then
$\mathcal D_z(\tau):=\mathcal D(\tau\otimes\ketbra zz)$ is a valid channel
for each fixed $z$, and $f_z$ is its Bell overlap.

Fix $z$ and abbreviate $k=k(z)$, $m=m(z)$. Apply the swap-symmetric
two-output extension of $\mathcal K_1$ at the $m$ positions with $z_j=1$,
leaving the $k$ branch-$0$ outputs untouched. Every $x\in\{0,1\}^m$ selects
one output at each extended position, and by the same marginal-invariance
argument as in Eq.~\eqref{eq:sm-marginal-invariance} each of the $2^m$
resulting placements attains the same conditional fidelity $f_z$. Repeating
the construction of Sec.~\ref{sm:setup}  gives projections
$P^{(z)}_x$ with $f_z=\trace[P^{(z)}_x\widetilde\omega^{(z)}]$ for every
$x$.

Two placements $x,y$ now share the $k$ branch-$0$ outputs \emph{in addition}
to the $s(x,y)$ extended positions where they agree, so the shared subsystem in
Lemma~\ref{lem:sm-overlap} has dimension
$d_0^{\,k}d_1^{\,s(x,y)}$ and that lemma gives
\[
\|P_x^{(z)}P_y^{(z)}\|_\infty
 \leq
 \min\left\{1,
 \frac{d_0^{\,k}d_1^{\,s(x,y)}}{M_n}\right\}.
\]. Lemmas~\ref{lem:sm-average}
and~\ref{lem:sm-weight} then yield, for every integer $t$
with $4C_{\mathrm{app}}\sqrt m\le t\le m/2$,
\begin{equation}
  f_z\le2\exp\!\left(-\frac{at^2}{m}\right)
  +\left(\sum_{j=0}^{t}\binom mj\right)^{1/4}
  \min\left\{1,\frac{d_0^{\,k}d_1^{\,t}}{M_n}\right\}^{1/2}.
  \label{eq:sm-conditional-finite-block}
\end{equation}

Now fix $r>q\log d_0$ and choose $\epsilon,\delta>0$ with
\begin{equation}
  (q+\epsilon)\log d_0+\delta\log d_1+\tfrac12h_2(\delta)<r,
  \qquad
  \epsilon<1-q,
  \qquad
  \delta<\frac{1-q-\epsilon}{2},
  \label{eq:sm-eps-delta}
\end{equation}
which is possible because the left-hand side of the first inequality tends
to $q\log d_0<r$ as $\epsilon,\delta\downarrow0$. The random variable
$k(z)$ is binomial with mean $qn$, so by Hoeffding's inequality
\begin{equation}
  \Pr\left[k(z)>(q+\epsilon)n\right]\le e^{-2\epsilon^2n}.
  \label{eq:sm-chernoff}
\end{equation}
On the complementary event $m\ge(1-q-\epsilon)n$. Set
$t:=\lfloor\delta n\rfloor$; then $t\le\delta n<m/2$ by
Eq.~\eqref{eq:sm-eps-delta}; moreover $t\ge\delta n/2$ and, since $n\geq m$,
$t\ge4C_{\mathrm{app}}\sqrt m$ for all large $n$, so
Eq.~\eqref{eq:sm-conditional-finite-block} applies. Using $m\le n$,
$\sum_{j\le t}\binom mj\le\sum_{j\le t}\binom nj\le2^{nh_2(\delta)}$, and
$M_n\ge2^{nr}$, we obtain uniformly over all typical $z$
\begin{equation}
  f_z\le2\cdot2^{-a\delta^2n\log e/4}
  +2^{-\frac n2\left[r-(q+\epsilon)\log d_0-\delta\log d_1
  -\frac12h_2(\delta)\right]},
\end{equation}
and both exponents are strictly positive by Eq.~\eqref{eq:sm-eps-delta}.
Bounding $f_z\le1$ on the atypical event and inserting
Eq.~\eqref{eq:sm-chernoff} into Eq.~\eqref{eq:sm-flag-average} gives
$F_n\le e^{-2\epsilon^2n}+2^{-\gamma'n}$ for some $\gamma'>0$. Hence every
$r>q\log d_0$ is an exponential strong-converse rate for $\mathcal F$.
\end{proof}

\end{document}